\title{Probabilistic Concurrent Kleene Algebra}
\author{
Annabelle McIver \qquad\qquad Tahiry Rabehaja
\institute{Department of Computing~\thanks{This research has been supported by the  Australia Research Council Discovery Grant DP1092464 and the iMQRS Grant from Macquarie University.}\\
Macquarie University \\
Sydney, Australia}
\email{\{annabelle.mciver,tahiry.rabehaja\}@mq.edu.au}
\and
Georg Struth
\institute{Department of computer Science\\
University of Sheffield\\
United Kingdom}
\email{g.struth@dcs.shef.ac.uk}
}
\newtheorem{theorem}{Theorem}[section]
\newtheorem{definition}[theorem]{Definition}
\newtheorem{proposition}[theorem]{Proposition}
\newtheorem{corollary}[theorem]{Corollary}
\newtheorem{example}[theorem]{Example}
\newcommand{\supp}{\mathrm{supp}}
\newcommand{\tea}{\mathtt{tea}}
\newcommand{\coin}{\mathtt{coin}}
\newcommand{\kick}{\mathtt{kick}}
\newcommand{\fail}{\mathtt{fail}}
\newcommand{\stuck}{\mathtt{stuck}}
\newcommand{\last}{\mathrm{last}}
\newcommand{\run}{\mathbf{run}}
\newcommand{\DD}{\mathcal{D}}
\newcommand{\D}{\Delta}
\newcommand{\F}{\mathcal{F}}
\newcommand{\lra}{\longrightarrow}
\newcommand{\Lra}{\Longrightarrow}
\newcommand{\arp}{\ar@{.}}
\newcommand{\ov}[1]{\overline{#1}}
\newcommand{\ovov}[1]{\overline{\overline{#1}}}
\newcommand{\pc}[1]{{\ \empty_{#1}}\!\!\oplus\ \!}
\newcommand{\pr}[1]{{\empty_{#1}}\!\|}
\newcommand{\trans}[3]{#1\stackrel{#2}{\lra}#3}
\newcommand{\Trans}[3]{#1\stackrel{#2}{\Lra}#3}
\newcommand{\unfold}{\mathbf{unfold}}
\newcommand{\Path}{\mathrm{Path}}
\newcommand{\Deadlock}{\mathbf{deadlock}}
\newcommand{\zero}{\mathbf{0}}
\newcommand{\Skip}{\mathbf{skip}}
\newcommand{\one}{\mathbf{1}}
\newcommand{\Pa}{\mathbf{a}}
\newcommand{\pAut}{\mathbf{PAut}}
\begin{document}
\maketitle

\begin{abstract}
We provide an extension of concurrent Kleene algebras to account for probabilistic properties. The algebra yields a unified framework containing nondeterminism, concurrency and probability and is sound with respect to the set of probabilistic automata modulo probabilistic simulation. We use the resulting algebra to generalise the algebraic formulation of a variant of Jones' rely/guarantee calculus.
\end{abstract}

\section{Introduction}

Since Rabin's seminal paper introducing probabilistic algorithms~\cite{Rab76}, the role of probability in the design of systems has increased in popularity and effectiveness. However, probability and concurrency results in greater complexity and so the formal verification of such system poses a particular challenge.  Many process algebras and automated tools have been developed to combine probability and concurrency but there are some important features of concurrent systems such as interference between shared variables that are not easily captured. Interference is for example a feature that appears in Jones' rely/guarantee calculus. In this paper, we generalise concurrent Kleene algebra and extend the variant of Jones' rely/guarantee rules found in~\cite{Hoa09} to apply to systems with probability. 

Introduced by Kleene in his study of automata and regular languages~\cite{Kle51}, Kleene algebras provide an elegant tool that is able to express static properties of standard sequential programs~\cite{Koz97,Koz00a,Koz03}. The axiomatisation of Kleene algebras and their extensions have been well studied by Conway~\cite{Con71}, Kozen~\cite{Koz94},  Salomaa~\cite{Sal66} and others~\cite{Des06,Mo07}.

Concurrent Kleene algebra has been developed and used to give robust proofs of concurrent systems, and in particular for verification techniques such as a variant of Jones' rely/guarantee rules~\cite{Jon81,Hoa09}. Algebra is an important mathematical tool for carrying out complex proofs, however its effectiveness relies on realistic models. In this paper, we give the first Kleene like structure that extends concurrent Kleene algebra with probabilistic behaviour. 

Our algebra is constructed from concurrent and probabilistic Kleene algebras~\cite{Mci04,Hoa09}. It provides an abstraction of concurrent systems and is able to express properties such as interference. The most important rule of concurrent Kleene algebra is the interchange law which establishes the interaction between concurrency and sequential execution. Informally, it says that the sequential execution of two concurrent systems entails an explicit dependency between the two concurrent parts, therefore no interference. But the concurrent execution of two sequential systems allows interference between any parts of the two systems. The relationship between these behaviours is expressed within a partial order determined by traces, and the algebraic rules mandate program operators' behaviours when used in combination.

In contrast, probabilistic Kleene algebra provides an abstraction of probability whose existence can be deduced from a subdistributivity law. The law describes the interaction between sequential execution and nondeterminism in the presence of probability. Informally, it says that a nondeterministic choice preceded by a probabilistic action can be resolved using the outcome of the probabilistic choice, that is,
$$x\cdot y + x\cdot z\leq x\cdot(y+z)$$
where $x,y,z$ are probabilistic programs, $+$ is the nondeterministic choice and $\cdot$ denotes sequential execution. This law also ensure monotonicity of the sequential composition which is an important property for combining sequential system from smaller components.

Our first contribution is to expand the set of axioms for concurrent Kleene algebra to account for the presence of probability and prove their soundness with respect to an automata model of probabilistic concurrency modulo simulation equivalence. In proving the soundness we find that some of the original axioms, both from concurrent and probabilistic Kleene algebras, needed to be weakened to accommodate the presence of both features in a single framework. Our simulation is based on the definition of Deng et al.~\cite{Den07a} though we show that it is equivalent to Segala\rq{}s definition of probabilistic weak forward simulation~\cite{Seg95}. Segala\rq{}s simulation completely characterises the coarsest precongruence included in the trace distribution equivalence of probabilistic automata~\cite{Seg03} and coincides with the infinitary probabilistic vector may testing order~\cite{Seg96}.  

Our second contribution is the extension of the rely/guarantee calculus of~\cite{Hoa09} to concurrent systems exhibiting probabilistic behaviours. 
We show that with the axiomatisation presented in Section~\ref{sec:soundness}, the algebra is sufficient to prove some important rely/guarantee rules which hold in the action-based interleaving model as well as any other semantics that satisfy the necessary algebraic properties. Since the inequality in the rules can be interpreted as the existence of a probabilistic simulation, the testing interpretation of simulation~\cite{Den07b} allows us to provide bounds for the maximal probability of failure.

In Section~\ref{sec:p-automata}, we provide a summary of probabilistic automata and the operators which will be used to interpret our algebraic terms. Section~\ref{sec:simulation} contains a survey of probabilistic simulation where the formulation of Deng et al is shown to be equivalent to Segala\rq{}s definition of probabilistic weak forward simulation. Section~\ref{sec:soundness} provides the soundness of the algebra with respect to automata and probabilistic simulation. In particular, we show how the probability interacts with the sequential as well as concurrent operators. We also provide the derivation of some rely/guarantee rules within and study a simple example to show the application of the algebraic reasoning developed. 

\section{Probabilistic Automata and Operations}\label{sec:p-automata}
The standard constructions of automata theory have been generalised to capture probabilistic behaviour which we summarise briefly here. We will use these automata as a model to exhibit soundness for the generalised concurrent
Kleena Algebra we describe in Section 4.

Probability is encoded in terms of distributions over the state space. A transition in a probabilistic automaton starts from a source state, executes an action from a given  alphabet $\Sigma$ and ends in a target distribution~\cite{Seg95}. Such a distribution is then resolved into a probabilistic choice which specifies the new state of the automaton up to some probabilistic factor.

\begin{definition}\label{def:probabilistic-automaton}
A probabilistic automaton is a tuple $(P,\Sigma,\lra,\phi_P,F_P)$ where 
\begin{itemize}
\item[-] $P$ is a set of states,
\item[-] $\Sigma$ is a set of actions,
\item[-] $\lra:P\times\Sigma\times\DD(P)$ is a set of probabilistic transitions where $\DD(P)$ is the set of finitely (or countably) supported probability distributions over $P$,
\item[-] $\phi_P$ is the start or initial distribution over the states in $P$,
\item[-] and $F_P$ is a set of final states.
\end{itemize}
\end{definition}

We usually identify an automoton with its set of states and explicit distinction will be made only when confusion may arise.
\begin{example}
Figure~\ref{fig:vending-machine} depicts two probabilistic automata. The automaton on the left models a faulty vending machine that becomes stuck with probability $0.2$; on the right the automaton represents the actions of a user interacting with the automaton by kicking it if he fails to get his tea. Two kicks means the machine really is broken. \footnote{This example was suggested by Steve Schneider~\cite{Ste12}.} 

The states of the two automata are labelled by $s_i,t_i$ respectively and distributions are not labelled unless they are initial and their components correspond to dotted arrows labelled with the probability. The set of actions is $\Sigma = \{\coin,\tea,\kick,\fail,\stuck\}$ where $\stuck$ is the only internal action. We assume that the two automata have no final state to facilitate the upcoming calculations. 
\begin{figure}[!h]
\begin{displaymath}
\xymatrix{&s_0\ar[d]^{\coin}& \\
&\arp[dl]|{0.2} \arp[dr]|{0.8}& \\
s_1\ar[d]_{\stuck}& &s_2\ar[d]^{\tea}\\
s_3\ar@/_/[uur]_{\kick}&&s_4
}\qquad\qquad
\xymatrix{
 &t_1\ar[d]_{\coin} &\\
 &\ar[dl]_{\kick}t_2\ar[dr]^{\tea}\\
 t_3\ar[d]_{\kick}\ar[rr]_{\tea}&&t_4\\
t_5\ar@(d,r)_{\fail}&&
}
\end{displaymath}
\caption{A probabilistic vending machine $V = \coin\cdot M$ and a user $U = \coin\cdot U'$ who kicks the machine once if it gets stuck.} \label{fig:vending-machine}
\end{figure}
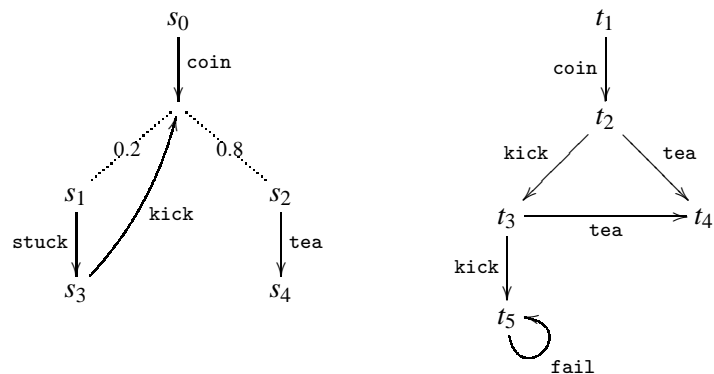
\end{example}

Definition~\ref{def:probabilistic-automaton} provides a specialised version of probabilistic automata. Generally, a transition is composed of a state and a distribution over $\Sigma\times P$ but restriction to automata with simple transitions suffices for our results in this paper. We denote by $\pAut$ the set of such probabilistic automata.

The set of actions $\Sigma$ is divided into two parts, namely, \textit{internal} and \textit{external} actions. Internal actions are either local or invisible and are usually intrinsic to the automaton where they are defined. They are not shared with other automata in the sense that they can be executed independently from the environment. A special case is the silent action $\tau$ which does not belong to the set of internal actions $I$ and we write $I_\tau = I\cup\{\tau\}$. In contrast, external actions are visible to the environment and may be synchronised. We denote the set of external actions by $E$, and define $\Sigma=I\cup E$ and $\Sigma_\tau = I_\tau\cup E$. The set $\Sigma$ is assumed implicitly and is fixed for every automaton.

The linear run of a probabilistic automaton yields a \textit{path}, as in the standard case, which is quantified with respect to a family of probability measures (indexed with a set of probabilistic scheduler). Formally, a path is a sequence $x_0a_1x_1a_2x_2\cdots$ of alternating states and actions such that there is a sequence of transitions $\trans{x_i}{a_{i+1}}{\mu_{i+1}}$, $i\geq 0$, where $x_i\in\supp(\mu_i)$ (the support of $\mu$) for every $i>0$. A path $\alpha$ always starts with a state and ends with another state, denoted $\last(\alpha)$, if it is finite. Usually, we want a path to start from a state in the support of the initial distribution. We denote $\Path(P)$ the set of all finite paths of an automaton $P$.

Next we provide some operations over probabilistic automata. The regular operators include nondeterministic choice ($+$), sequential composition ($\cdot$) and Kleene star ($*$) which abstracts tail iteration together with the constants skip ($\one$),  deadlock ($\zero$) and the automaton that enables a single successfully terminating action. Such an automaton is denoted $\mathbf{a}$ where the action is $a$. Formally, we have

\begin{itemize}
\item[-] Deadlock $\zero$ corresponds to $(\{x\},\emptyset,\delta_x,\emptyset)$ where $\delta_x$ is the point probability distribution concentrated on the state $x$.
\item[-] Skip $\one$ corresponds to $(\{x\},\emptyset,\delta_x,\{x\})$.
\item[-] $\Pa$ corresponds to $(\{x,x'\},\{\trans{x}{a}{\delta_{x'}}\},\delta_x,\{x'\})$.
\end{itemize}

In the reminder of this section, we fix two probabilistic automata $P,Q$ with respective initial distribution $\mu_0,\nu_0$, sets of final states $F_P,F_Q$ and sets of transitions $\lra_P,\lra_Q$. We also assume that the state spaces of $P$ and $Q$ are disjoint. We now give an automata semantics for each of the named operators.

Nondeterminism is defined as in the standard case by constructing a new initial point distribution $\delta_z$ such that $z$ leads to the respective initial distributions of the operands via $\tau$ transitions.

\begin{definition}
Nondeterministic choice between $P$ and $Q$ is defined by:
$$P+Q = (P\cup Q\cup\{z\},\lra_P\cup\lra_Q\cup\{\trans{z}{\tau}{\mu_0},\trans{z}{\tau}{\nu_0}\},\delta_{z},F_P\cup F_Q)$$ where $z\notin P\cup Q$.
\end{definition}

\begin{definition}
The sequential composition of $P$ followed by $Q$ is defined by: $$P\cdot Q = (P\cup Q,\lra_{P\cdot Q},\mu_0,F_Q)$$ where 
$$\lra_{P\cdot Q} = \lra_P\cup\lra_Q\cup\{\trans{x}{\tau}{\nu_0}\ |\ x\in F_P \}.$$
\end{definition}

This definition is a straightforward generalisation of the standard definition from automata theory.

Now we define the Kleene star in the standard way.

\begin{definition}
The tail iteration or Kleene star of $P$ is $P^* = (P\cup \{z\},\lra_{P^*},\delta_z,\{z\})$ where 
$$\lra_{P^*} = \lra_P\cup\{\trans{z}{\tau}{\mu_0},\trans{x}{\tau}{\delta_z}\ |\ x\in F_P\}$$
and $z\notin P$.
\end{definition}

The implementation of a probabilistic choice between two automata is defined below. 
\begin{definition}
We define the probabilistic choice between $P$, with probability $p$, and $Q$, with probability $1-p$, as
$$P\pc{p} Q = (P\cup Q,\lra_P\cup\lra_Q,p\mu_0+(1-p)\nu_0,F_P\cup F_Q).$$
\end{definition}

Finally, the parallel composition is defined using a probabilistic version of CSP parallel composition operation that synchronises on the actions in $A\subseteq E$. The frame set $A$ is assumed to be fixed throughout this paper.

Firstly, given $\mu\in\DD(P) $ and $\nu\in\DD(Q)$, the product $\mu\times\nu$ is a distribution over $P\times Q$ such that $(\mu\times\nu)(x,y) = \mu(x)\nu(y)$ (component-wise multiplication).
\begin{definition}
We define the parallel composition of $P$ and $Q$ as 
$$P\pr{A} Q = (P\times Q,\lra_{P\pr{A} Q},\mu_0\times\nu_0,F_P\times F_Q)$$ where, for each $a\in\Sigma_\tau$, a transition $\trans{(x,y)}{a}{\mu\times\nu}$ belongs to $\lra_{P\|Q}$ if one of the following conditions holds:
\begin{itemize}
\item $a\in A$ and $\trans{x}{a}{\mu}$ and $\trans{y}{a}{\nu}$,
\item $a\notin A$ and $\trans{x}{a}{\mu}$ and $\nu=\delta_y$,
\item $a\notin A$ and $\trans{y}{a}{\nu}$ and $\mu = \delta_x$.
\end{itemize}
\end{definition}

In the construction of the transition relations of $P\pr{A}Q$, given a transition of $P$, if it is labelled by an action in $A$ then it is blocked until it gets synchronised with a transition of $Q$ labelled with the same action. Otherwise, that transition is interleaved with the transitions of $Q$.

\begin{example}~\label{ex:algebra}
Using this language we can describe the automata from Figure~\ref{fig:vending-machine}. The right hand side automaton of Figure~\ref{fig:vending-machine} corresponds to the algebraic expression 
$$\coin\cdot(\kick\cdot(\kick\cdot\fail^* + \tea) + \tea)$$
where we have abused notation by denoting the automaton that does a single action, say $\coin$, and then terminates successfully with the same notation $\coin$.

The left hand side is obtained as a sequential composition $\coin\cdot M$ where $M$ corresponds to the least fixed point of 
$$f(X) = \stuck\cdot\kick\cdot X\cdot \mathbf{0}\ \pc{0.2} \tea\cdot\mathbf{0}.$$
We will express the least fixed point of $f$ as an algebraic expression in Section~\ref{sec:soundness}. 
\end{example}

\section{Probabilistic Simulation}\label{sec:simulation}

In this section, we define an inequality on the set $\pAut$ as per the constructions of~\cite{Den07b,Den07a,Seg03,Seg96}. The equivalence relation is based on weak simulation and we are mainly interested in the equational theory of this model and relate it to the axiomatisation of probabilistic and concurrent Kleene algebras.

We provide two equivalent definitions of simulation. The first definition is the probabilistic simulation of~\cite{Den07a} and the second is probabilistic weak forward simulation of~\cite{Seg96}. Both definitions of simulation rely on the lifting of relations from states to distributions. Given a relation $S\subseteq X\times\DD(Y)$, the lifting~\cite{Den07a} of $S$ is a relation  $\ov{S}\subseteq \DD(X)\times\DD(Y)$ such that $\mu\ov{S}\nu$ if and only if there exists a family of real number $\{p_n\ |\ n\in N\}\subseteq[0,1]$ such that $\sum_np_n = 1$ and
\begin{enumerate}
\item $\mu=\sum_{n\in N}p_n\delta_{x_n}$,
\item for each $n\in N$, there exists $\nu_n\in\DD(X)$ such that $x_nS\nu_n$,
\item $\nu = \sum_{n\in N}p_n\nu_n$.
\end{enumerate}

The lifting is a probability preserving function that associates to each probabilistic relation $R$ a standard relation $\ov{R}$ over the set of distributions. It is important to notice that the decomposition of $\mu$ is not necessarily canonical that is, there may be some repetition in the $x_i$s. Moreover, the lifting also applies to labelled transition because $\trans{\cdot}{a}{\cdot}\subseteq P\times\DD(P)$ for any probabilistic automaton $P$ and any action $a\in\Sigma_\tau$. Hence, we denote $\trans{}{\ov{a}}{}$ the lifting of this transition which corresponds to the notion of combined transition of~\cite{Seg03,Seg96}.

Lastly, we extend internal transitions with reflexivity, that is, we write $\trans{x}{\tau}{\mu}$ if such a transition exists in the automaton or $\mu = \delta_x$. The lifted version is again denoted $\trans{}{\ov{\tau}}{}\subseteq\DD(P)\times\DD(P)$. Finally, weak transitions are obtained from the reflexive transitive closure of $\trans{}{\ov{\tau}}{}$, denoted $\Trans{}{}{}$, and we write $\Trans{\mu}{a}{\mu'}$ if there exist $\mu_1,\mu_2$ such that $\Trans{\mu}{}{\trans{\mu_1}{\ov{a}}{\Trans{\mu_2}{}{\mu'}}}$.

We now give the formal definition of simulation by straightforwardly generalising~\cite{Den07a} to automata with final states.
\begin{definition}\label{df:simulation}
A probabilistic simulation $S$ from $P$ to $Q$ is a relation $S\subseteq P\times \DD(Q)$ satisfying the following properties:
\begin{enumerate}
\item[1.] there exists $\nu_0\rq{}$ such that $\mu_0 \ov{S}\nu_0\rq{}$ and $\Trans{\nu_0}{}{\nu_0\rq{}}$,
\item[2.] if $\trans{x}{a}{\mu\rq{}}$ is a valid transition of $P$ and $xS\nu$, there exists $\nu\rq{}\in\DD(Q)$ such that $\Trans{\nu}{a}{\nu\rq{}}$ and $\mu\rq{}\ov{S}\nu\rq{}$,
\item[3.] if $x\in F_P$ and $xS\nu$ then there exists $\nu'\in\DD(F_Q)$ such that $\Trans{\nu}{}{\nu'}$. 
\end{enumerate}
\end{definition}

Property (a) ensures that preceding $\tau$ actions do not interfere with probabilistic choice (i.e. $P\pc{p} Q$ and $\tau\cdot(P\pc{p}Q)$ are equal). Property (b) is the usual co-inductive definition of simulation and property (c) ensures that if a state $x\in P$ is simulated by a distribution $\nu\in\DD(Q)$ and $P$ can terminate successfully at $x$ then $Q$ can also terminate successfully from $\nu$ after a \textit{finite number} of internal transitions.

A simulation is always total on reachable states, that is, if $S\subseteq P\times\DD(Q)$ is a simulation and $x\in P$ such that $x_0a_1x_1\cdots x$ is a path that occurs with positive maximal probability, then there exists $\nu\in\DD(Q)$ such that $x S\nu$.

We write $P\leq Q$ if there is a simulation from $P$ to $Q$ and $P\equiv Q$ iff  $P\leq Q$ and $Q\leq P$.

\begin{example}
Figure~\ref{fig:sim} depicts two automata related by a simulation relation i.e. $M\leq H$ where $M$ (resp. $H$) is the left (resp. right) automaton. The simulation is obtained from the relation $S = S'\cup\{(s_3,\mu)\ |\ (s_1,\mu)\in S'\}$ where 
$$S'=\{(s_1,0.2\delta_{u_0}+0.8\delta_{u_1}),(s_1,\delta_{u_2}),(s_1,\delta_{u_4}),(s_2,\delta_{u_1}),(s_2,\delta_{u_3}),(s_4,\delta_{u_5})\}.$$
In fact, we can write $\nu_0=0.2(0.2\delta_{u_0}+0.8\delta_{u_1})+0.8\delta_{u_1}$ where $s_1S(0.2\delta_{u_0} + 0.8\delta_{u_1})$ and $s_2S\delta_{u_1}$. Hence, $\mu_0\ov S\nu_0$. Since $\stuck$ is an internal action, it follows $s_3S\mu$ and $\Trans{\mu}{\ov{\tau}}{\mu}$ for every distribution $\mu$ such that $s_1S\mu$. Next, we have $s_3S(0.2\delta_{u_0}+0.8\delta_{u_1})$ and $\trans{s_3}{\kick}{\mu_0}$. Since $\mu_0=0.2\delta_{s_1}+0.8\delta_{s_2}$ and $s_1S\delta_{u_2}$ and $s_2S\delta_{u_3}$, it follows that $\mu_0\ov{S} (0.2\delta_{u_2}+0.8\delta_{u_3})$ and $\Trans{(0.2\delta_{u_0}+0.8\delta_{u_1})}{\ov{\kick}}{(0.2\delta_{u_2}+0.8\delta_{u_3})}$. The other inductive cases are proved in similar fashion. Moreover, an algebraic proof is given in the next section.
\begin{figure}
\begin{displaymath}
\xymatrix{
&\arp[dl]|{0.2}\mu_0 \arp[dr]|{0.8}&  &\hspace{2cm}&&\nu_0\arp[dl]|{0.04}\arp[dr]|{0.96}& \\
s_1\ar@/_/[drrrr]\ar@/_/[ddrrrr]\ar@/_/@{.>}[rrrr]|<<<<<<<<{0.2}\ar@/^/@{.>}[rrrrrr]|<<<<<<<{0.8}\ar[dd]_{\stuck}& &s_2\ar@/_/[rrrd]\ar@/_/[rrrr]\ar[dd]_>>>>>>\tea &&u_0\ar[d]_{\kick}& &u_1\ar[dl]_{\kick}\ar[dd]^{\tea}\\
&& &&u_2\ar[d]_{\kick}& u_3\ar[rd]_{\tea} & \\
s_3\ar@/_/[uuur]_<<<<<<\kick&&s_4\ar@/_/[rrrr]&&u_4\ar@(d,l)^{\kick,\tea,\fail}&&u_5
}
\end{displaymath}
\caption{Two automata related by simulation: the left is M and the right is H. Remind that $\stuck$ is an internal action so we have removed the arrows from $s_3$ because they are exactly the same as for $s_1$. The dotted arrow represents non-trivial distribution again.}\label{fig:sim}
\end{figure}
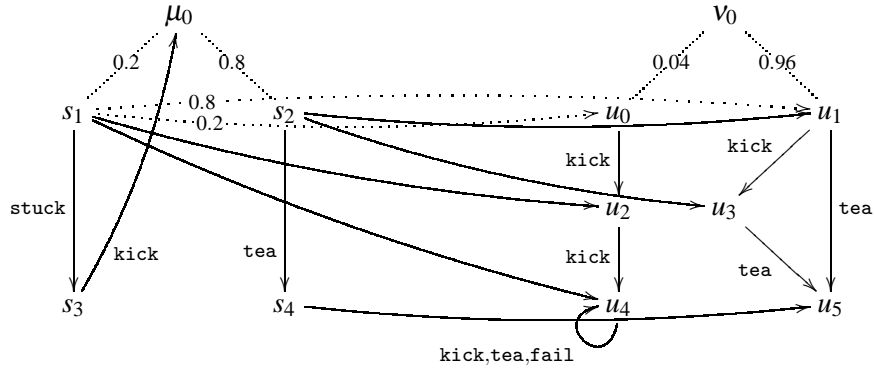
\end{example}
\begin{proposition}\label{pro:sim-partial-order}
The simulation relation is a preorder.
\end{proposition}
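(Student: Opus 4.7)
To show that $\leq$ is a preorder, I need to establish reflexivity and transitivity of the simulation relation on $\pAut$.

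\textbf{Reflexivity.} For any $P$, I would take the candidate $S = \{(x,\delta_x)\ |\ x\in P\}\subseteq P\times\DD(P)$ and verify the three clauses of Definition~\ref{df:simulation}. The lifted relation $\ov{S}$ is just the identity on $\DD(P)$: for any $\mu\in\DD(P)$, decompose $\mu=\sum_x\mu(x)\delta_x$ and pick $\nu_x=\delta_x$ so that $\mu\ov{S}\mu$. Clause (1) then holds with $\nu_0'=\mu_0$ since $\Trans{\mu_0}{}{\mu_0}$ is reflexive. For clause (2), a transition $\trans{x}{a}{\mu'}$ lifts to $\trans{\delta_x}{\ov{a}}{\mu'}$, hence $\Trans{\delta_x}{a}{\mu'}$ and $\mu'\ov{S}\mu'$. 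Clause (3) is immediate since $x\in F_P$ gives $\delta_x\in\DD(F_P)$ and $\Trans{\delta_x}{}{\delta_x}$.

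\textbf{Transitivity.} Given simulations $S_1\subseteq P\times\DD(Q)$ from $P$ to $Q$ and $S_2\subseteq Q\times\DD(R)$ from $Q$ to $R$, the natural candidate is the composition
$$S_3 = \{(x,\rho)\in P\times\DD(R)\ |\ \exists \nu\in\DD(Q).\ x S_1 \nu \text{ and } \nu\,\ov{S_2}\,\rho\}.$$
The key preliminary step is a \emph{composition lemma} stating that $\ov{S_1};\ov{S_2}\subseteq \ov{S_3}$ (and in fact equality holds by a suitable rewriting of the convex decomposition), so that chaining liftings remains a lifting of the composed state-to-distribution relation. This needs a careful repackaging of two convex decompositions into a single one on $\DD(P)\times\DD(R)$, exploiting that decompositions in the definition of lifting need not be canonical.

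\textbf{Handling weak transitions.} For clause (2), suppose $\trans{x}{a}{\mu'}$ and $xS_3\rho$, witnessed by some $\nu$ with $xS_1\nu$ and $\nu\,\ov{S_2}\,\rho$. Applying clause (2) for $S_1$ yields $\nu'$ with $\Trans{\nu}{a}{\nu'}$ and $\mu'\ov{S_1}\nu'$. The delicate part is to propagate the weak transition $\Trans{\nu}{a}{\nu'}$ through $S_2$: I need a lemma saying that if $\eta\,\ov{S_2}\,\rho$ and $\Trans{\eta}{a}{\eta'}$, then there exists $\rho'$ with $\Trans{\rho}{a}{\rho'}$ and $\eta'\,\ov{S_2}\,\rho'$. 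This is proved by decomposing $\Trans{}{a}{}$ into a finite (or countable) sequence $\ov\tau$-steps followed by one $\ov a$-step followed by more $\ov\tau$-steps, and inducting on each block using clause (2) of $S_2$, reassembling the matched distributions via the linearity of lifting. Combined with the composition lemma, this yields $\mu'\ov{S_3}\rho'$.

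\textbf{Initial and final clauses.} Clause (1) for $S_3$ follows by applying clause (1) for $S_1$ to obtain $\nu_0'$ with $\mu_0\ov{S_1}\nu_0'$ and $\Trans{\nu_0}{}{\nu_0'}$; then applying the weak transition propagation lemma (with $a=\tau$) to the initial datum $\nu_0\ov{S_2}\rho_0$ supplied by clause (1) for $S_2$. Clause (3) is similar: $x\in F_P$ with $xS_3\rho$ gives $\nu$ with $xS_1\nu$ and then $\nu'\in\DD(F_Q)$ with $\Trans{\nu}{}{\nu'}$; applying clause (3) of $S_2$ pointwise on $\supp(\nu')$ and averaging with the lifting weights produces $\rho'\in\DD(F_R)$ with $\Trans{\rho}{}{\rho'}$.

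The main obstacle is purely technical: the two transitivity-like lemmas for liftings (composition of liftings, and propagation of weak transitions through a lifted relation). Everything else is bookkeeping on top of these facts, which is why the identification with Segala's weak forward simulation (surveyed in Section~\ref{sec:simulation}) is convenient, since analogous lemmas are standard in that setting.
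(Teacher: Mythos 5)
Your plan is correct and is the standard argument: the paper states Proposition~\ref{pro:sim-partial-order} without giving its own proof (it is the routine preorder fact underlying the cited simulation literature~\cite{Den07a,Seg96}), and your choices --- reflexivity via $\{(x,\delta_x)\ |\ x\in P\}$, whose lifting is the identity, and transitivity via the composite $S_3$ supported by the two lifting lemmas (composition of liftings through decomposition of convex combinations, and propagation of weak transitions $\Trans{\nu}{a}{\nu'}$ along a lifted relation) --- are exactly how this is done there. The only work left is the two lemmas you already flag, both of which are standard properties of the lifting (left-decomposability, linearity, and closure of $\Trans{}{}{}$ under convex combination), so there is no genuine gap.
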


Any probabilistic simulation satisfying Definition~\ref{df:simulation} will be referred simply as a simulation. In contrast, the definition of forward simulation~\cite{Seg96} relies on a \textit{double lifting}. Given a relation $S\subseteq X\times Y$, the double lifting of $S$, denoted $\ovov{S}$, is a subset of $\DD(X)\times\DD(Y)$ where $\mu\ov{\ov{S}}\nu$ iff there exists a function $w:X\times Y\to[0,1]$ such that 
\begin{enumerate}
\item if $w(x,y)>0$ then $x Sy$,
\item for every $x\in X$, $\sum_{y\in Y}w(x,y) = \mu(x)$,
\item for every $y\in Y$, $\sum_{x\in X}w(x,y) = \nu(y)$.
\end{enumerate}

The function $w$ is again a probability preserving function that provides corresponding decompositions for $\mu$ and $\nu$. Double lifting generates a distribution over the set of distributions which complicates the lifting of transitions. To obtain a standard relation over the set of distributions, Segala~\cite{Seg03,Seg96} provided a flat version of a distribution in $\DD(\DD(Q))$ through the use of $\pi:\DD(\DD(X))\to\DD(X)$ such that 
$$\pi(\phi) = \sum_{\mu\in\supp(\phi)}\phi(\mu)\mu.$$
We now give the modified version of Segala's probabilistic weak forward simulation.
\begin{definition}\label{df:pwfs}
A relation $S\subseteq P\times \DD(Q)$ is a probabilistic weak forward simulation if 
\begin{itemize}
\item[a)] there exist $\psi_0\in\DD(\DD(Q))$, such that $\mu_0 \ovov{S}\psi_0$ and $\Trans{\nu_0}{}{\pi(\psi_0)}$,
\item[b)] if $\trans{x}{a}{\mu\rq{}}$ is a valid transition of $P$ and $xS\nu$, there exists $\psi\in\DD(\DD(Q))$ such that $\Trans{\nu}{a}{\pi(\psi)}$ and $\mu\rq{}\ov{\ov{S}}\psi$.
\item[c)] if $x\in F_P$ and $xS\nu$ then there exists $\psi\in\DD(\DD(F_Q))$ such that $\Trans{\nu}{}{\pi(\psi)}$.
\end{itemize}
\end{definition}

\begin{proposition}
Let $X,Y$ be two sets, $S\subseteq X\times\DD(Y)$, $\mu\in\DD(X)$ and $\psi\in\DD(\DD(Y))$. If $\mu\ovov{S}\psi$ then $\mu\ov{S}\pi(\psi)$.
\end{proposition}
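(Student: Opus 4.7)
The plan is to manufacture the index family required by the single lifting directly from the weight function $w$ witnessing the double lifting. Concretely, assume $w : X \times \DD(Y) \to [0,1]$ satisfies the three clauses defining $\mu \ovov{S} \psi$, and let $N = \{(x,\nu) \in X \times \DD(Y) \mid w(x,\nu) > 0\}$. For each $n = (x,\nu) \in N$, set $p_n = w(x,\nu)$ and choose $x_n = x$, $\nu_n = \nu$. Note that $N$ is at most countable because each of the marginals $\mu$ and $\psi$ has countable support, so this indexing is legitimate for the definition of single lifting given in the paper.

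Next I would verify the three clauses of the single lifting in turn. Summing $p_n$ over $n \in N$ gives $\sum_{x,\nu} w(x,\nu) = \sum_x \mu(x) = 1$, using the first marginal condition. For clause~(1), the same marginal gives $\sum_{n \in N} p_n \delta_{x_n} = \sum_x \bigl(\sum_\nu w(x,\nu)\bigr) \delta_x = \sum_x \mu(x)\delta_x = \mu$. For clause~(2), condition~(1) of the double lifting ensures $w(x,\nu) > 0 \Rightarrow x S \nu$, so each chosen $\nu_n$ satisfies $x_n S \nu_n$ by construction. For clause~(3), interchange the summation order and apply the second marginal:
\[
\sum_{n \in N} p_n \nu_n \;=\; \sum_{\nu \in \supp(\psi)} \Bigl(\sum_x w(x,\nu)\Bigr)\nu \;=\; \sum_{\nu \in \supp(\psi)} \psi(\nu)\,\nu \;=\; \pi(\psi),
\]
which matches the definition of $\pi$ given just before the statement.

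There is essentially no obstacle: the argument is pure bookkeeping, and the only subtlety is being careful that the decomposition in the definition of single lifting is not required to be canonical, so repeated source states $x$ (appearing paired with different $\nu$'s) cause no problem. The swap of summation in the last display is justified because all entries are non-negative and the index set is countable, so Tonelli applies trivially. Once these calculations are in place the proof is complete.
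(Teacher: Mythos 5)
Your proof is correct and follows essentially the same route as the paper's: both index the single-lifting decomposition by the support of the weight function $w$, read off $\mu = \sum_n p_n\delta_{x_n}$ and $\pi(\psi) = \sum_n p_n\nu_n$ from the two marginal conditions, and use clause~(1) of the double lifting for $x_n S \nu_n$. You merely spell out the marginal computations and the countability/non-canonicity bookkeeping that the paper leaves implicit.
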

\begin{proof}
If $\mu\ovov{S}\psi$, then there exists $w:X\times\DD(Y)\to[0,1]$ satisfying the condition above. Then by considering $I = \supp(w)$, it directly follows that $\mu = \sum_{i\in I} w(i)\delta_{x_i}$, each $x_i$ is related to some $\nu_i$ and $\pi(\psi) = \sum_{i\in I}w(i)\nu_i$.
\end{proof}

\begin{corollary}\label{cor:equivalence-of-simulation}
A relation is a probabilistic simulation iff it is a probabilistic weak forward simulation on $\pAut$.
\end{corollary}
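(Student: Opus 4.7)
The plan is to prove the two implications separately, reducing each to a technical lemma about the relationship between the single lifting $\ov{S}$ and the double lifting $\ovov{S}$.

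For the direction from weak forward simulation to simulation, the preceding proposition does almost all the work. Given a witness $\psi \in \DD(\DD(Q))$ for any of clauses (a)--(c) of Definition~\ref{df:pwfs}, set $\nu' := \pi(\psi)$. The proposition immediately yields $\mu \ov{S} \nu'$, while $\Trans{\nu}{}{\pi(\psi)}$ is already of the shape $\Trans{\nu}{}{\nu'}$ demanded by Definition~\ref{df:simulation}. In clause (c), membership $\psi \in \DD(\DD(F_Q))$ forces $\pi(\psi) \in \DD(F_Q)$, so the final-state condition transfers as well.

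For the converse, I would prove the companion lemma: if $\mu \ov{S} \nu$ with $\nu \in \DD(Q)$, then there exists some $\psi \in \DD(\DD(Q))$ with $\mu \ovov{S} \psi$ and $\pi(\psi) = \nu$. Starting from a decomposition $\mu = \sum_{n \in N} p_n \delta_{x_n}$ with $x_n S \nu_n$ and $\nu = \sum_n p_n \nu_n$, I would set $\psi := \sum_n p_n \delta_{\nu_n}$, which is a genuine element of $\DD(\DD(Q))$ because $\sum_n p_n = 1$. Then $\pi(\psi) = \sum_n p_n \nu_n = \nu$ follows directly from the definition of $\pi$. To exhibit the double-lifting witness, define $w : P \times \DD(Q) \to [0,1]$ by letting $w(x,\eta)$ be the sum of those $p_n$ for which $x_n = x$ and $\nu_n = \eta$. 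Then $w(x,\eta) > 0$ forces some such $n$ to exist, whence $x S \eta$; the $\DD(Q)$-marginal of $w$ recovers $\mu(x)$ and the $P$-marginal recovers $\psi(\eta)$. Applying this lemma clause by clause translates each condition of Definition~\ref{df:simulation} into the corresponding condition of Definition~\ref{df:pwfs}; clause (c) is in fact trivial by taking $\psi := \delta_{\nu'}$ for the witness $\nu' \in \DD(F_Q)$ given by Definition~\ref{df:simulation}(3).

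The main (and rather mild) obstacle is purely bookkeeping. The decomposition $\sum_n p_n \delta_{x_n}$ is not canonical, so the same pair $(x, \eta)$ may be produced by several indices $n$, and the weight function $w$ must aggregate precisely those indices. Once that is handled, verifying the three defining properties of $\ovov{S}$ reduces to reordering countable sums, and the countable support of $\psi$ follows from that of the original decomposition.
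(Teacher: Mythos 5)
Your proposal is correct and follows essentially the same route as the paper's (sketched) proof: the forward direction is the preceding proposition applied with $\nu' := \pi(\psi)$, and the converse uses exactly the paper's construction $\psi := \sum_n p_n\delta_{\nu_n}$ so that $\pi(\psi)=\nu$, applied clause by clause. Your explicit weight function $w$ aggregating indices with the same pair $(x_n,\nu_n)$ merely fills in the bookkeeping the paper leaves implicit.
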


\begin{proof}
We provide a sketch for the proof though the complete proof should not be hard to obtain from it.

That Definition~\ref{df:pwfs} implies Definition~\ref{df:simulation} follows directly from the previous proposition.

Assume that $S\subseteq P\times\DD(Q)$ satisfies Definition~\ref{df:simulation}. If $\mu\ov{S}\nu$, then there exsits a decomposition $\mu = \sum_{i\in I}p_i\delta_{x_i}$ such that for each $i$, there exists $\nu_i\in\DD(Q)$ such that $x_i\ov{S}\nu_i$ for each $i$, and $\nu = \sum_{i\in I}p_i\nu_i$. Hence $\mu\ovov{S}\sum_{i\in I}p_i\delta_{\nu_i}$. We just apply this simple construction for each of the three cases.
\end{proof}

\begin{proposition}
Simulation is a precongruence i.e. if $P\leq Q$ then $P+R\leq Q+R$, $P\cdot R\leq Q\cdot R$,  $P^*\leq Q^*$, $P\pc{p}R\leq Q\pc{p}R$, $P\pr{A} R\leq Q\pr{A} R$ and the same holds for binary operators when the order of the arguments is reversed.
\end{proposition}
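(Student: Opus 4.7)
The plan is to take a witnessing simulation $S\subseteq P\times\DD(Q)$ for $P\leq Q$ and, for each operator $\odot$, construct an explicit candidate simulation $S'$ between $P\odot R$ and $Q\odot R$ and verify the three clauses of Definition~\ref{df:simulation}. A common ingredient is the identity-like relation $I_R=\{(r,\delta_r)\mid r\in R\}$ on the $R$-component, whose lifting $\ov{I_R}$ is the identity on $\DD(R)$ and therefore trivially propagates every $R$-transition and every $R$-final state. For each operator I would combine $S$, $I_R$ and an ad hoc pairing of the fresh states (if any) into $S'$, then split clause (2) of Definition~\ref{df:simulation} according to which transition cluster of the composite produced the outgoing transition.

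For the operators without state interaction, namely $+$, $\cdot$, $*$ and $\pc{p}$, I would set $S'=S\cup I_R\cup\{(z,\delta_{z'})\}$, the last pair being present only for $+$ and $*$ and linking the fresh source states $z$ and $z'$ of the two composites. Clause (1) reduces, for $+$ and $*$, to the trivial pairing $\delta_z\,\ov{S'}\,\delta_{z'}$ followed by one simulating $\tau$-step into the initial distribution, after which clause (1) of $S$ supplies the rest; for $\cdot$ it is inherited directly from $S$; for $\pc{p}$ it follows by convex-combining the $S$-decomposition of $\mu_0$ with the trivial $I_R$-decomposition of $\mu_0^R$. Clause (2) is a routine case split on whether the outgoing transition is an $R$-transition (matched via $I_R$), a $P$-transition (matched via $S$), or a glue $\tau$-transition; the $\cdot$-glue $\trans{x}{\tau}{\mu_0^R}$ from $x\in F_P$ is matched by using clause (3) of $S$ to reach a distribution on $F_Q$ and then firing the matching glue $\tau$, and the $*$-glue $\trans{x}{\tau}{\delta_z}$ from $x\in F_P$ is handled identically, ending at $\delta_{z'}$. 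Clause (3) is immediate because $I_R$ pairs final $R$-states with themselves, $S$ handles final $P$-states, and the unique final state $z$ of $P^*$ is paired with $\delta_{z'}\in\DD(F_{Q^*})$.

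The parallel composition is the main obstacle. I would take
$$S'=\{((x,r),\nu\times\delta_r)\mid xS\nu,\ r\in R\}.$$
Clause (1) follows from $\mu_0\,\ov{S}\,\nu_0'$ and $\Trans{\nu_0}{}{\nu_0'}$ by tensoring with the $R$-component $\mu_0^R$: each $\tau$-step interleaves freely with the unchanged $R$-component (since $\tau\notin A$), yielding $\Trans{\nu_0\times\mu_0^R}{}{\nu_0'\times\mu_0^R}$, while the tensoring preserves lifting so that $\mu_0\times\mu_0^R\,\ov{S'}\,\nu_0'\times\mu_0^R$. Clause (2) splits into: interleaved $R$-steps matched via $I_R$; interleaved $P$-steps $\trans{x}{a}{\mu'}$ with $a\notin A$ matched by $\Trans{\nu}{a}{\nu''}$ from $S$, tensored throughout with $\delta_r$; and the delicate synchronised case $\trans{x}{a}{\mu'}$, $\trans{r}{a}{\rho}$ with $a\in A$. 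For the last, $S$ yields $\nu\Lra\nu_1\stackrel{\ov a}{\lra}\nu_2\Lra\nu''$ with $\mu'\,\ov S\,\nu''$; writing $\nu_1=\sum_i p_i\delta_{y_i}$ with $\trans{y_i}{a}{\lambda_i}$ and $\nu_2=\sum_i p_i\lambda_i$, each pair $(y_i,r)$ admits the synchronised transition $\trans{(y_i,r)}{a}{\lambda_i\times\rho}$, whose lifted combination is exactly $\nu_1\times\delta_r\stackrel{\ov a}{\lra}\nu_2\times\rho$. Sandwiching with the interleaved $\tau$-chains produces $\Trans{\nu\times\delta_r}{a}{\nu''\times\rho}$, and $\mu'\times\rho\,\ov{S'}\,\nu''\times\rho$ follows by tensoring the $\ov{S}$-decomposition of $(\mu',\nu'')$ with $\rho$. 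Clause (3) is inherited from clause (3) of $S$ via the same tensoring. The symmetric cases for $+$, $\pc{p}$ and $\pr{A}$ are immediate by the symmetry of their definitions, and $R\cdot P\leq R\cdot Q$ follows from $S''=I_R\cup S$: the only novel step is the glue $\tau$ from $r\in F_R$, matched by firing the corresponding glue $\tau$ to $\nu_0^Q$ and then invoking clause (1) of $S$.
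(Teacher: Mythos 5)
Your proposal is correct and, for the cases the paper actually proves (sequential composition and Kleene star), uses essentially the same candidate relations and case analyses as the appendix proof, including the key step of matching the glue $\tau$-transition from a final state of $P$ by invoking clause (3) of the given simulation before firing the corresponding glue in the target automaton. The remaining cases ($+$, $\pc{p}$, $\pr{A}$), which the paper delegates to Deng et al.\ and Segala, are worked out soundly in your tensoring argument for $\pr{A}$ (splitting the weak transition as $\Lra\,\stackrel{\ov a}{\lra}\,\Lra$ and tensoring each stage with the unchanged, respectively updated, $R$-marginal), so no gap remains.
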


The proof of this proposition can be found in the Appendix Proposition~\ref{ap:pro:precongruence}.

We conclude this section with a remark about the simulation of Definitions~\ref{df:simulation} and~\ref{df:pwfs}. In Proposition~\ref{cor:equivalence-of-simulation}, we have shown that the corresponding definitions of~\cite{Seg95} and~\cite{Den07a} coincide (Notice that we can replace final states with some special external action and obtain a formulation closer to~\cite{Den07a,Seg96}). On the one hand, Segala has shown that the largest precongruence included in the trace distribution equivalence coincides with \textit{vector may testing} where there are uncountably many success actions~\cite{Seg96}. On the other hand, Deng et al. have shown that vector and scalar testings coincide on the recursion-free fragment of probabilistic automata and that with the same restriction, Definition~\ref{df:simulation} is complete for testing equivalence~\cite{Den07b}. With the help of Proposition~\ref{cor:equivalence-of-simulation} and the equivalence between probabilistic weak forward simulation (Definition~\ref{df:pwfs}) and the coarsest precongruence included in the trace distribution, we conclude that Deng's completeness for may testing extends to automata with countable state spaces. However, it is still unknown whether the equivalence between scalar and vector testing in the infinite case is valid. 

These equivalences are the main motivation for our use of simulation in order to create an interleaving model for our algebra. It should be noted that simulation equivalence is decidable for finite automata but it is unknown whether an efficient decision procedure exists. This is in contrast to other related results in the literature showing that strong simulation is decidable in polynomial time~\cite{Kom12}.

\section{Probabilistic Concurrent Kleene Algebra}\label{sec:soundness}

In this section, we introduce probabilistic concurrent Kleene algebra. We show that the set of probabilistic automata modulo probabilistic simulation as defined by Definitions~\ref{df:simulation} or~\ref{df:pwfs} satisfies an extension of Kleene algebras that includes probability and concurrency. 

Concurrent Kleene Algebra has four unary and binary operators, namely, $+,\cdot,\|$ and $*$. These operators have pomset operation semantics but the axiomatisation is too weak to allow the presence of probability. In contrast, Probabilistic Kleene Algebra has three operators, namely, $+,\cdot$ and $*$. We ensure that these operators coincide with the respective operators of Concurrent Kleene Algebra and provide a new probabilistic concurrent Kleene algebra that extends both structures. Without explicit probabilistic choice, such combination generates a weak Concurrent Kleene Algebra.

\begin{definition}
A weak concurrent Kleene algebra is an algebraic structure with signature $(K,+,\cdot,\|,*,\zero,\one)$ where $K$ is a set closed under the operations and satisfies  Equations~(\ref{eq:+-idem}-\ref{eq:+-assoc}),~(\ref{eq:right-skip}-\ref{eq:+-left-subdist}) and (\ref{eq:*-unfold}-\ref{eq:|-subdistributes-+})
\end{definition}

To gain complete control of probabilities, we append explicit probabilistic choices to weak concurrent Kleene algebra. Many of the following equations have been proven elsewhere so we will prove only those that are specific to our algebra. We concentrate on equations describing the interactions between probabilistic choices, sequential composition, Kleene star and the exchange law~(\ref{eq:exchange-law}).

We assume the following precedence between the operators. The Kleene star $*$ binds more tightly than $\cdot$ which binds more tightly than $\|$. The operator $\|$ binds more tightly than $+$ and $\pc{p}$ and we use parenthesis to parse expressions having $+$ and $\pc{p}$ at the same level.

The following equations are standard and the proofs are omitted (they can be found in ~\cite{Den07a}).

\begin{eqnarray}
P & \equiv & P+P\label{eq:+-idem}\\
P & \equiv & P+ \zero\\
P+Q & \equiv & Q+P \\
P+(Q+R)&\equiv& (P+Q)+R\label{eq:+-assoc}\\
P & \equiv & P\pc{p}P\label{eq:pc-idem}\\
P\pc{p}Q & \equiv & Q\pc{1-p } P \\
P\pc{p}(Q\pc{q}R)&\equiv& (P\pc{p\rq{}}Q)\pc{q\rq{}}R\label{eq:pc-assoc}
\end{eqnarray}
where $p\rq{}q\rq{}=p$, $(1-p\rq{})q\rq{} = (1-p)q$ and $1-q\rq{} = (1-p)(1-q)$. Moreover, the equivalence $P\leq Q$ iff $P+Q\equiv Q$ follows from these equations, that is, simulation coincides with the natural order of the algebra. Remind that in our interpretation $Q$ has more behaviours than $P$. A complete characterisation of the consequences of Equation~(\ref{eq:pc-idem}-\ref{eq:pc-assoc}) with respect to probabilistic bisimulation can be found in~\cite{Sta96}.

The proof of the following propositions can be found in the Appendix under Proposition~\ref{ap:pro:sequential},~\ref{ap:pro:kleene-star} and~\ref{ap:pro:parallel} respectively.

\begin{proposition}
The sequential composition satisfies 
\begin{eqnarray}
P & \equiv & P\cdot\one\label{eq:right-skip}\\
P & \equiv & \one\cdot P\label{eq:left-skip}\\
\zero & \equiv & \zero\cdot P\label{eq:left-zero}\\
P\cdot(Q\cdot R)&\equiv& (P\cdot Q)\cdot R\label{eq:.-assoc}\\
P\cdot R + Q\cdot R&\equiv& (P + Q)\cdot R\label{eq:+-right-dist}\\
P\cdot Q + P\cdot R&\leq& P\cdot(Q+ R)\label{eq:+-left-subdist}\\
(P\pc{p} Q)\cdot R&\equiv&P\cdot R \pc{p} Q\cdot R\label{eq:pc-dist}\\
 P\cdot(Q\pc{p} R)&\leq&P\cdot Q \pc{p} P\cdot R\label{eq:pc-supdist}
\end{eqnarray}
\end{proposition}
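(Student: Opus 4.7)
The plan is to verify each law by exhibiting one explicit probabilistic simulation per direction (two for the equivalences, one for the inequalities) and checking the three clauses of Definition~\ref{df:simulation}. The uniform recipe is as follows: in every case below, I would relate a state $x$ of the left-hand automaton to a point distribution $\delta_y$ on a corresponding state of the right-hand automaton (or, in one case, to a fixed convex combination of such point distributions); condition (a) is then satisfied by unfolding the definition of the transition relation, which only adds $\tau$-transitions out of $F_P$ into $\nu_0$, and condition (c) is immediate because the relations I construct preserve final states.

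For the unit, zero and associativity laws~(\ref{eq:right-skip}--\ref{eq:.-assoc}) I would use identity-like relations, absorbing the added $\tau$-transition out of the final state of $\one$ in~(\ref{eq:right-skip}) by the reflexive weak transition on the right, and using commutativity of the $\tau$-transitions added at $F_P$ and $F_{P\cdot Q}$ to fold the two possible parenthesisations together in~(\ref{eq:.-assoc}). For right-distributivity~(\ref{eq:+-right-dist}), the two sides differ only in that the right-hand side contains two disjoint copies $R_1,R_2$ of $R$; I would relate the single $R$ on the left to both copies (the simulation need not be functional) and route the $F_P$- and $F_Q$-branches to the appropriate copy. The reverse inclusion uses the obvious identification, and an analogous bookkeeping handles~(\ref{eq:pc-dist}), whose only subtlety is that the initial distribution $p\mu_0^P+(1-p)\mu_0^Q$ on the left must be matched, component by component, to the same mixture on the right once the two copies of $R$ are identified via the $\ov{S}$-decomposition.

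The two subdistributivity inequalities are where the real work lies. For~(\ref{eq:+-left-subdist}) I will relate both copies $P_1,P_2$ of $P$ on the left to the single $P$ on the right, so that the initial $\tau$-branchings out of the new nondeterministic state on the left are absorbed by the identity weak transition on $\mu_0^P$ on the right, and the post-$F_{P_i}$ $\tau$-transitions into $\mu_0^Q$ and $\mu_0^R$ on the left are matched by two consecutive $\tau$-transitions on the right, threading through the added nondeterministic state of $Q+R$. For~(\ref{eq:pc-supdist}) I would use a non-trivial splitting: each state $x\in P$ on the left is related to $p\delta_{x_1}+(1-p)\delta_{x_2}$, where $x_1,x_2$ are its copies in the two summands of the right-hand side. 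A direct calculation shows that this makes the initial distributions $\ov{S}$-related; each action transition of $P$ lifts through the mixture because $P_1$ and $P_2$ fire the same transition independently; and the $\tau$-transition from $f\in F_P$ into $p\mu_0^Q+(1-p)\mu_0^R$ on the left is matched by the pair of $\tau$-transitions $\trans{f_1}{\tau}{\mu_0^Q}$ and $\trans{f_2}{\tau}{\mu_0^R}$ taken simultaneously on the right.

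The main obstacle will be~(\ref{eq:pc-supdist}), because this is exactly where the one-sidedness of the axiom manifests: the splitting by the probabilistic weight $p$ must be respected coherently all the way through the execution of $P$, and no reverse simulation can exist, since two disjoint copies of $P$ on the right can resolve internal nondeterminism independently while the single copy of $P$ on the left cannot. To make the verification painless I would extract once and for all a small bilinearity lemma, saying that if $x_i\, S\, \nu_i$ for each $i$ then $\sum_i q_i\delta_{x_i}$ is $\ov{S}$-related to $\sum_i q_i\nu_i$ for any convex weights $q_i$, and then reuse this observation to dispatch the remaining clauses mechanically.
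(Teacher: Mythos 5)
Your proposal follows essentially the same route as the paper's proof: identity-like relations for the unit, zero and associativity laws, the copy-collapsing relation (plus relating the fresh initial state to $\mu_0$) for~(\ref{eq:+-right-dist}), (\ref{eq:+-left-subdist}) and~(\ref{eq:pc-dist}), and for~(\ref{eq:pc-supdist}) exactly the paper's key construction relating each $x\in P$ to the convex mixture $p\delta_{x_1}+(1-p)\delta_{x_2}$ of its two copies, with the final-state $\tau$-step matched by the lifted pair of $\tau$-transitions. The only slip is cosmetic: in~(\ref{eq:+-right-dist}) the two disjoint copies of $R$ live on the left-hand side $P\cdot R+Q\cdot R$, not the right, but since you establish both directions anyway this does not affect the argument.
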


\begin{proposition}
The Kleene star satisfies the following laws:
\begin{eqnarray}
P^* & \equiv & \one + P\cdot P^*\label{eq:*-unfold}\\
P\cdot Q\leq Q & \Rightarrow & P^*\cdot Q\leq Q\label{eq:*-induction}
\end{eqnarray}
\end{proposition}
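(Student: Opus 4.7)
My plan for both laws is to construct explicit probabilistic simulations between the corresponding automata via Definition~\ref{df:simulation}. For the unfold $P^{*}\equiv\one+P\cdot P^{*}$, I would exhibit simulations in both directions. Both automata have a fresh initial state whose $\tau$-transitions lead either to an immediately terminating configuration or into a copy of the initial distribution of $P$; each $x\in F_{P}$ carries a $\tau$-bridge returning to (a copy of) the fresh initial state. The candidate relations identify the two fresh initial states and match the $P$- and $P^{*}$-substructures pointwise, and the verification is a direct case analysis on the four transition kinds ($P$-internal, $\tau$ into $P$, $\tau$-loop-back, and the termination bridge into $\one$ or a fresh copy of $P^{*}$). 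Since every newly introduced transition targets a point distribution, the probabilistic aspects are cosmetic.

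For the induction rule, assume $S$ is the given simulation from $P\cdot Q$ to $Q$. I first enlarge $S$ by adjoining the identity pairs $\{(y,\delta_{y}):y\in Q\}$, calling the result $S^{+}$; this is safe since $Q$-states inside $P\cdot Q$ carry exactly their $Q$-transitions, so the identity satisfies all simulation clauses trivially. I then define $T\subseteq (P^{*}\cdot Q)\times\DD(Q)$ by $xT\nu$ iff $xS^{+}\nu$ for $x\in P\cup Q$ and, for the fresh Kleene state $z$, by $zT\nu$ iff $\nu_{0}^{Q}\,\overline{S^{+}}\,\nu$. The initial clause then holds through $\nu_{0}^{Q}\,\overline{S^{+}}\,\nu_{0}^{Q}$ via the identity component. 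The transition $\trans{z}{\tau}{\mu_{0}^{P}}$ is matched by appealing to the initial clause of $S^{+}$ on $P\cdot Q$, which furnishes a weak $\tau$-transition from $\nu_{0}^{Q}$ to an $\overline{S^{+}}$-image of $\mu_{0}^{P}$; $\trans{z}{\tau}{\nu_{0}^{Q}}$ is matched reflexively; and transitions internal to $P$ and $Q$ are handled directly by $S^{+}$.

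The main obstacle, and the only place where the hypothesis $P\cdot Q\leq Q$ is used essentially, is the Kleene loop-back $\trans{x}{\tau}{\delta_{z}}$ for $x\in F_{P}$. Given $xT\nu$, i.e., $xS^{+}\nu$, I need $\nu'$ with $\Trans{\nu}{\tau}{\nu'}$ and $\delta_{z}\,\overline{T}\,\nu'$, equivalently $\nu_{0}^{Q}\,\overline{S^{+}}\,\nu'$. The observation that unlocks this is that in $P\cdot Q$ the same state $x$ also carries the bridge transition $\trans{x}{\tau}{\nu_{0}^{Q}}$ introduced by the sequential composition, so applying $S^{+}$'s transition clause to this bridge yields exactly such a $\nu'$. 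The subtle part I expect to take most care with is global coherence: for every $\nu$ with $zT\nu$, both $\tau$-transitions out of $z$ must remain simulable; I plan to close $T$ on the $Q$-side under weak $\tau$-reachability, which preserves the simulation clauses by monotonicity of weak transitions under $\overline{S^{+}}$-lifting. The final-state clause follows directly from $S^{+}$'s, since $F_{P^{*}\cdot Q}=F_{Q}$.
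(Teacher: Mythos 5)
Your treatment of the unfold law is essentially the paper's own: an explicit relation identifying the fresh initial states and acting (up to copies) as the identity on the $P$-substructure, verified by case analysis; that half is fine. The induction law, however, has a genuine gap, and it sits exactly where you place your ``global coherence'' worry. After one traversal of the loop, your relation $T$ relates $z$ to distributions $\nu'$ that are in general different from $\nu_0^{Q}$: matching the bridge $\trans{x}{\tau}{\nu_0^{Q}}$ of $P\cdot Q$ at a pair $x\,S\,\nu_x$ only yields some $\nu'$ with $\Trans{\nu_x}{}{\nu'}$ and $\nu_0^{Q}\,\ov{S}\,\nu'$, so $z\,T\,\nu'$. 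At the pair $(z,\nu')$ you must now match $\trans{z}{\tau}{\mu_0^{P}}$, i.e.\ exhibit a weak transition from $\nu'$ to an $\ov{S}$-image of $\mu_0^{P}$. Clause 1 of Definition~\ref{df:simulation} gives such a weak transition only from $\nu_0^{Q}$ itself; knowing $\nu_0^{Q}\,\ov{S}\,\nu'$ only tells you that $\nu'$ can match the moves of the states in $\supp(\nu_0^{Q})$ inside $P\cdot Q$, and those are just $Q$-transitions, which say nothing about reaching an $\ov{S}$-image of $\mu_0^{P}$. Transporting clause 1 along $\ov{S}$ lands you in the relational composite $\ov{S}\circ\ov{S}$, which need not be contained in $\ov{S^{+}}$. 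Concretely, one can choose $Q$ with a one-way component that simulates all the required behaviour but from which the distributions that $S$ assigns to $\mu_0^{P}$ are unreachable, and a perfectly legitimate witness $S$ for $P\cdot Q\leq Q$ whose bridge match is forced into that component; your $T$ then violates clause 2 at $z$. Closing $T$ under weak $\tau$-reachability on the $Q$-side makes things worse, not better: it adds pairs, hence obligations, and supplies none of the missing weak transitions.

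The underlying obstruction is that a single witness for $P\cdot Q\leq Q$ certifies one unwinding of the loop, while simulating $n$ traversals requires composing $S$ with itself $n$ times through the lifting; no fixed relation built from $S^{+}$ alone can serve. This is exactly why the paper does not attempt a one-shot relational proof of~(\ref{eq:*-induction}): it sets $\F(X)=\one+P\cdot X$, proves $\F^{n}(\zero)\cdot Q\leq Q$ for every $n$ by induction (each step implicitly uses a freshly composed simulation, via the laws already established, precongruence and transitivity), and then reaches $P^{*}$ by a limit argument: the $\unfold$ construction makes the approximants loop-free, they form an increasing chain under automata inclusion $\trianglelefteq$, and their union is $\unfold(P^{*})$. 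If you want to keep a relational flavour, you would have to index the left-hand states by the number of completed loop traversals (i.e.\ work on an unfolding) and use the $n$-fold composite of $S$ at depth $n$ --- which is in substance the paper's argument, not a repair of the single relation $T$.
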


For the parallel composition, we assume synchronisation over all external actions and denote it simply with $\|$ without any frame as defined in Section~\ref{sec:p-automata}.
\begin{proposition}
The parallel composition satisfies
\begin{eqnarray}
P\|Q & \equiv & Q\|P \label{eq:|-comm}\\
P\|(Q\|R)&\equiv& (P\|Q)\|R\label{eq:|-assoc}\\
(P\|Q)\cdot (P\rq{}\|Q\rq{})&\leq& P\cdot P\rq{}\|Q\cdot Q\rq{}\label{eq:exchange-law}\\
P\| Q + P\| R&\leq& P\|(Q +R)\label{eq:|-subdistributes-+}\\
P\| Q \pc{p} P\| R&\equiv& P\|(Q \pc{p}R)\label{eq:|-distributes-pc}
\end{eqnarray}
\end{proposition}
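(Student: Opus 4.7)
The plan is to prove each equation by exhibiting an explicit probabilistic simulation, together with its converse where the statement is an equivalence, between the two automata obtained from the operational definitions of Section~\ref{sec:p-automata}, and to invoke Definition~\ref{df:simulation} together with Proposition~\ref{pro:sim-partial-order}. Because the state space of $P\|Q$ is the Cartesian product $P\times Q$, every candidate simulation can be built pointwise on components and verified against the three synchronisation clauses of the definition of $\pr{A}$ in turn.

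For commutativity~(\ref{eq:|-comm}) and associativity~(\ref{eq:|-assoc}) the witnessing relations are the obvious bijective renamings $(x,y)\mapsto\delta_{(y,x)}$ and $(x,(y,z))\mapsto\delta_{((x,y),z)}$; initial distributions, final states and each of the three transition clauses line up directly. For the probabilistic distribution law~(\ref{eq:|-distributes-pc}) the two state spaces already coincide and the equivalence reduces to the identity $\mu_0^P\times(p\mu_0^Q+(1-p)\mu_0^R)=p(\mu_0^P\times\mu_0^Q)+(1-p)(\mu_0^P\times\mu_0^R)$, i.e.\ bilinearity of the product of distributions. The subdistribution law~(\ref{eq:|-subdistributes-+}) is genuinely one-sided because on the right the leading $\tau$-transition introduced by $Q+R$ can interleave with actions of $P$; my plan is to map each reachable state of $P\|Q$ (respectively $P\|R$) to its image under the inclusion $Q\hookrightarrow Q+R$ (respectively $R\hookrightarrow Q+R$) and use clause~1 of Definition~\ref{df:simulation} to absorb, as a weak transition, the outer $\tau$ emanating from the fresh initial state of $Q+R$.

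The substantive case, and the main obstacle, is the exchange law~(\ref{eq:exchange-law}). The left-hand automaton $(P\|Q)\cdot(P'\|Q')$ has a single bridging $\tau$-transition from every pair in $F_P\times F_Q$ to the product initial distribution $\mu_0^{P'}\times\nu_0^{Q'}$, forcing both components to cross phases simultaneously. The right-hand automaton $P\cdot P'\|Q\cdot Q'$ has two independent phase-change $\tau$-transitions, one per component, that may fire at different times and therefore admits strictly more interleavings, so it should simulate the left. I would take $S$ to be the natural embedding of $(P\times Q)\cup(P'\times Q')$ into $(P\cup P')\times(Q\cup Q')$ composed with $\delta_{(-)}$. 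Synchronised and interleaved transitions within the $P\|Q$-phase or the $P'\|Q'$-phase are matched by identical transitions on the right. The unique bridging $\tau$ of the left is matched by the weak transition that first fires the component-wise $\tau$ from $x\in F_P$ to $\mu_0^{P'}$ to reach the intermediate distribution $\mu_0^{P'}\times\delta_y$, and then lifts the component-wise $\tau$ from $y\in F_Q$ to $\nu_0^{Q'}$ to reach $\mu_0^{P'}\times\nu_0^{Q'}$; here the fact that lifting commutes with products of distributions is crucial. Final states correspond bijectively via $F_{P'}\times F_{Q'}$, discharging clause~3.

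All five arguments reduce to diagram chases on the transition rules of $\pr{A}$, $\cdot$, $+$ and $\pc{p}$, with no probabilistic reasoning beyond bilinearity of the product and the commutation of $\overline{(\cdot)}$ with products. The only delicacy is in the exchange-law proof, where when verifying clause~2 of Definition~\ref{df:simulation} one must case-split on whether the firing component lies before, across, or after the bridging $\tau$.
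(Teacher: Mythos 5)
Your proposal is correct and follows essentially the same route as the paper: for the exchange law you use exactly the paper's relation $S=\{((x,y),\delta_{(x,y)})\}$ from $(P\|Q)\cdot(P'\|Q')$ to $P\cdot P'\|Q\cdot Q'$, matching the single bridging $\tau$ by the two-step weak transition through $\mu_0^{P'}\times\delta_y$, and the associativity argument is the same reassociation bijection lifted to point distributions. The only difference is cosmetic: the paper discharges commutativity, $+$-subdistributivity and the $\pc{p}$-distribution law by citing Deng et al., whereas you sketch the (standard, correct) direct simulations for them.
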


Notice that we cannot have equality for the interchange law~(\ref{eq:exchange-law}) even with a fully synchronised $\|$. For example $\Pa\|\Pa \equiv (\Pa\cdot \one)\|(\one\cdot\Pa) > (\Pa\|\one)\cdot (\one\|\Pa) \equiv \zero$ where we assume that the action $a$ is external, hence sychronised.

\begin{theorem}
$(\pAut,+,\cdot,\|,^*,\Deadlock,\Skip)$ is a weak concurrent Kleene algebra.
\end{theorem}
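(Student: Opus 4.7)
The theorem is essentially a compilation result: the definition of weak concurrent Kleene algebra lists exactly those equations which have already been established (for the set $\pAut$ modulo simulation equivalence) in the preceding propositions. My plan is therefore to assemble the required axioms from the results already proved, and to check that nothing outside the signature $(+,\cdot,\|,{*},\zero,\one)$ is needed.

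Concretely, I would organise the proof as follows. First, closure of $\pAut$ under $+$, $\cdot$, $\|$ and ${*}$ is immediate from the constructions given in Section~\ref{sec:p-automata}, and the constants $\Deadlock$, $\Skip$ are elements of $\pAut$ by definition. Next, the four equations for $+$ (idempotence, identity with $\zero$, commutativity, associativity, i.e.\ equations~(\ref{eq:+-idem})-(\ref{eq:+-assoc})) are the standard ones for nondeterministic choice modulo simulation, referenced explicitly from~\cite{Den07a}. The equations (\ref{eq:right-skip})-(\ref{eq:+-left-subdist}) on sequential composition are exactly the non-probabilistic clauses of Proposition~4.2 and so are available directly. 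The Kleene unfold and induction laws~(\ref{eq:*-unfold})-(\ref{eq:*-induction}) come from Proposition~4.3. The parallel laws~(\ref{eq:|-comm}), (\ref{eq:|-assoc}), (\ref{eq:exchange-law}) and (\ref{eq:|-subdistributes-+}) come from Proposition~4.4, which supplies commutativity, associativity, the exchange law and left subdistributivity over $+$.

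The only point requiring any care is that the definition of weak concurrent Kleene algebra selects precisely the fragment of equations that remain sound when the probabilistic choice $\pc{p}$ is discarded from the signature; consequently none of the probability-specific laws (\ref{eq:pc-dist}), (\ref{eq:pc-supdist}) or (\ref{eq:|-distributes-pc}) is invoked here. Likewise, one must respect the weakenings already flagged in the paper: left distributivity of $\cdot$ over $+$ is a subdistributivity inequality, and the exchange law is strict in general (as shown by the counterexample $\Pa\|\Pa$). I expect no substantive obstacle, since each required equation or inequation is quoted verbatim from an earlier proposition; the theorem amounts to recording that the collection of those propositions matches the axiom list in Definition~4.1.
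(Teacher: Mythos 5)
Your proposal is correct and matches the paper's (implicit) argument exactly: the paper offers no separate proof of this theorem, treating it as an immediate consequence of the preceding propositions on $\cdot$, $*$ and $\|$ together with the standard laws for $+$ cited from the literature, which is precisely the compilation you describe. The only point worth adding is that interpreting the inequational axioms in $\pAut$ modulo simulation also quietly uses the precongruence proposition and the remark that simulation coincides with the natural order $P\leq Q$ iff $P+Q\equiv Q$, both of which the paper establishes beforehand.
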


\begin{definition}
A probabilistic concurrent Kleene algebra is a weak concurrent Kleene algebra with a collection of probabilistic choices $\pc{p}$, $p\in[0,1]$, satisfying equations~(\ref{eq:pc-idem}-\ref{eq:pc-assoc},\ref{eq:pc-dist}-\ref{eq:pc-supdist},\ref{eq:|-distributes-pc}).
\end{definition}

\begin{example}
We end this section by providing an algebraic proof for the existence of a simulation between the automata in Figure~\ref{fig:sim}. First, we express the least fixed point of the function $f$ of Example~\ref{ex:algebra} as promised. We prove that
$$f(X) = (\stuck\cdot\kick\pc{0.2} \tea\cdot\mathbf{0})\cdot X\cdot\zero$$ 
using equations~(\ref{eq:left-zero}) and~(\ref{eq:pc-dist}). Now, we show that the least fixed point of $f(X) = P\cdot X\cdot \zero$ is $P^*\cdot \zero$. In fact $f(P^*\cdot \zero) = P\cdot P^*\cdot \zero = (\one + P\cdot P^*)\cdot \zero = P^*\cdot \zero$ because of equations~(\ref{eq:left-skip}), ~(\ref{eq:+-right-dist}) and~(\ref{eq:*-unfold}). Now let $Q$ be a suffix point of $f$ i.e. $P\cdot Q\cdot \zero\leq Q$, then monotonicity and Equation~(\ref{eq:left-zero}) implies $P\cdot Q\cdot \zero\leq Q\cdot \zero$. Therefore, $P^*\cdot Q\cdot \zero\leq Q\cdot \zero\leq Q$ because of the induction law~(\ref{eq:*-induction}) and $\zero\leq\one$. Hence $P^*\cdot\zero\leq Q$ follows from Equation~(\ref{eq:left-zero}) and monotonicity of $\cdot$.

Therefore, the left hand side automaton is simulation equivalent to 
$$M = (\stuck\cdot\kick\pc{0.2}\tea\cdot \zero)^*\cdot \zero$$
One unfold of this automaton gives 
\begin{eqnarray}
M & \equiv & (\stuck\cdot\kick\pc{0.2}\tea\cdot \zero)\cdot (\stuck\cdot\kick\pc{0.2}\tea\cdot \zero)^*\cdot \zero\nonumber \\
&\equiv & \stuck\cdot\kick\cdot(\stuck\cdot\kick\pc{0.2} \tea\cdot\zero)^*\cdot\zero\pc{0.2}\tea\cdot\zero\nonumber\\
&\equiv & \stuck\cdot\kick\cdot(\stuck\cdot\kick\pc{0.2}\tea\cdot\zero)\cdot M\pc{0.2}\tea\cdot\zero\nonumber\\
&\leq & (\stuck\cdot\kick\cdot\stuck\cdot\kick\cdot M\pc{0.2}\stuck\cdot\kick\cdot\tea\cdot\zero)\pc{0.2}\tea\cdot\zero\nonumber\\
& \equiv & \stuck\cdot\kick\cdot\stuck\cdot\kick\cdot M\pc{0.04}(\stuck\cdot\kick\cdot\tea\cdot\zero\pc{0.16/0.96}\tea\cdot\zero)\nonumber\\
&\leq&\stuck\cdot\kick\cdot\stuck\cdot\kick\cdot M\pc{0.04}(\stuck\cdot\kick\cdot\tea\cdot\zero + \tea\cdot\zero)\nonumber
\end{eqnarray}
The second equality follows from Equations~(\ref{eq:pc-dist}) and~(\ref{eq:left-zero}). The third equality follows from an unfolding of the Kleene star and the definition of M. The fourth inequality follows from Equation~(\ref{eq:pc-supdist}). The fifth equality follows from Equation~(\ref{eq:pc-assoc}) and in the last equality, we have used the fact that $P\pc{p}Q\leq P+Q$. We use monotonicity to finally deduce that $M\leq H$ because $M\leq\run(\{\kick,\tea,\fail\})$ where $\run(\{a_i\}_{i=0,n})) = (\sum_{i=0}^n a_i)^*$ and 
$$H = \kick\cdot\kick\cdot\run(\{\kick,\tea,\fail\})\pc{0.04}(\kick\cdot\tea\cdot\zero + \tea\cdot\zero).$$
\end{example}

\section{Rely-Guarantee Rules}~\label{sec:rely-guarantee}

The rely/guarantee formalism provides a powerful tool for verifying a system with multiple interacting components. The concept is based on deriving the properties of the larger system through the use of inference rules on the specification of the components. We are interested in generating these inference rules algebraically.

In this section we extend the algebraic formulation of the rely-guarantee calculus in~\cite{Hoa09} to include probabilistic systems. Notice that the automata model provides a particular interpretation of rely/guarantee tuples because it is action based though the algebraic laws persist to any model satisfying the axioms. Moreover, our results are based on the definition of~\cite{Hoa09} instead of providing another interpretation of Jones' rely/guarantee components~\cite{Jon81}. 

A rely/guarantee quintuple is composed of five components $P\ R\ \{U\}\ Q\ G$ where $P,Q$ are pre/postconditions, $R$ is a rely property and $G$ is the guaranteed part. These components are usually algebraic specifications and can be interpreted as automata when needed where the parallel composition has a fixed frame. In the automata model, $\run$ denotes the automata with self-loop constructed from the external actions and the algebraic proofs are valid for every model where $\run$ is replaced by the unit of $\|$ (if it exists).

\begin{definition}
A rely guarantee quintuple $P\ R\ \{U\}\ Q\ G$ holds if and only if 
$$P\cdot(R\|U)\leq Q\quad \wedge\quad U\leq G$$
\end{definition}

In other words, if $U$ is part of a system that satisfies the property $R$, then the system will satisfy the specification determined by $(P,Q)$ and $U$'s behaviour is determined by $G$.

The difference between our approach and that of Kwiatkowska et al in~\cite{Kwi10} lies in the definition of the rely/guarantee tuple, that is, the interpretations of \textrm{tuple satisfaction} are different. In~\cite{Kwi10}, the guaranteed property is somehow part of the postcondition and they are expressed using safety properties instead of simulation relations. Our approach is an extension of the work in~\cite{Hoa09} and has a similar flavour as~\cite{Kom12}.

We now provide some simple rely/guarantee rules together with their algebraic proofs. Notice that these rules are valid for all models satisfying the axioms of probabilistic concurrent Kleene algebra. This nicely illustrates the power of algebras where they provide results that are model independent. Of course, this power comes with the disadvantage that many details are lost through abstraction. The importance of these details depends on the system and the properties to be studied.

Remind that, given a finite set of external actions $A = \{a_1,\dots,a_n\}$, we denote $\run(A) = (\sum_{i=1}^na_i)^*$ which is similar to the run of standard CSP. Notice that for every term $P$ constructed from $A$ and the algebraic operators, we have $P\leq\run(A)$ and $P\pr{A}\run(A) = P$.

An \emph{isolated system} is composed of two components that interact without any interference from an outside environment. An example of isolated system is given by our vending machine and user. 

\begin{proposition}
For every term $P,P',Q,Q',U,U',R,R',G$ and $G'$ we have the following concurrent rule for isolated system:

\begin{equation}\label{rul:sum}
\frac{P\ R\ \{U\}\ Q\ G\qquad  P'\ R'\ \{U'\}\ Q'\ G'\qquad  G\leq R'\qquad  G'\leq R}{T\ \run\ \{U\|U'\}\ Q\ (G\|G')}
\end{equation}

where $\run$ is constructed from the external actions of $U\|U'$ and the conclusion is valid whenever $T \leq P$ and $T\leq P\rq{}$.
\end{proposition}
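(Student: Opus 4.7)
My plan is to unfold the rely/guarantee quintuple in the conclusion so that $T\ \run\ \{U\|U'\}\ Q\ (G\|G')$ reduces to the two inequalities
\[
T\cdot\bigl(\run\|(U\|U')\bigr)\leq Q\qquad\text{and}\qquad U\|U'\leq G\|G'.
\]
The second is immediate from the hypotheses $U\leq G$ and $U'\leq G'$ together with monotonicity of $\|$, which follows from the subdistribution law~(\ref{eq:|-subdistributes-+}) and the characterisation $P\leq Q$ iff $P+Q\equiv Q$.

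For the first inequality I chain the available ingredients as follows. The hypothesis $P\ R\ \{U\}\ Q\ G$ unfolds to $P\cdot(R\|U)\leq Q$. From the cross-condition $G'\leq R$ and $U'\leq G'$ I obtain $U'\leq R$; applying monotonicity of $\|$ then gives $U\|U'\leq U\|R$, which by commutativity~(\ref{eq:|-comm}) equals $R\|U$. Combining this with $T\leq P$ and monotonicity of $\cdot$ yields
\[
T\cdot(U\|U')\ \leq\ T\cdot(R\|U)\ \leq\ P\cdot(R\|U)\ \leq\ Q.
\]
The remaining step is the absorption $\run\|(U\|U')\equiv U\|U'$, which follows from the observation stated just before the proposition that $X\pr{A}\run(A)=X$ whenever $X$ is built from the external action set $A$; here $A$ is chosen to be the external alphabet of $U\|U'$, so the identity applies to $X=U\|U'$, and associativity/commutativity of $\|$ (equations~(\ref{eq:|-comm})-(\ref{eq:|-assoc})) finish the reduction.

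The main obstacle I anticipate is the bookkeeping around the frame $A$ and the absorption of $\run$ into the parallel product; I would factor this out as a short auxiliary fact so that the main derivation stays a clean chain of monotonicity steps. Two mild asymmetries in the statement are worth recording during the write-up: the hypothesis $G\leq R'$ is never invoked, and correspondingly the conclusion carries only $Q$ rather than $Q\|Q'$ or $Q'$, which is consistent with the derivation exploiting just the first of the two component quintuples (a symmetric derivation using $U\leq G\leq R'$ would establish the analogous bound by $Q'$, should one wish to strengthen the rule). Everything else reduces to applications of the weak concurrent Kleene algebra axioms, so the resulting rule is model-independent, as advertised at the beginning of the section.
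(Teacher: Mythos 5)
Your proof is correct and follows essentially the same route as the paper: unfold the two inequalities of the quintuple, use the cross-premise to bound $U\|U'$ by $R\|U$, chain monotonicity with $T\leq P$ and $P\cdot(R\|U)\leq Q$, absorb $\run$ via the identity $X\|\run\equiv X$ for terms over the external alphabet, and obtain the guarantee part from monotonicity of $\|$. Your observation that only $G'\leq R$ is needed for the stated postcondition $Q$ is accurate; the paper's appeal to $G\leq R'$ (giving $U\leq R'$) serves the symmetric conclusion with postcondition $Q'$ mentioned immediately after the proposition.
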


\begin{proof}
We have $P\cdot (R\|U)\leq Q$ and $U\leq G$ from the interpretation of the first quintuple. The premise $G\leq R'$ implies that $U\leq R'$. Therefore, monotonicity yields 
$P\cdot (U\|U')\leq Q$. Hence, if $T\leq P$ and $T\leq P'$ then $T\cdot(U\|U')$ satisfies both $Q$. Since we assume that the external actions of $U$ and $U'$ coincide which is used in the parametrisation of $\|$ and $\leq$, we have $\run\|(U\|U') = U\|U'$ and we obtain the guaranteed part of the conclusion with monotonicity of $\|$.
\end{proof}

The Rule~(\ref{rul:sum}) implies that the quintuple $T\ \run\ \{U\|U'\}\ Q'\ (G\|G')$ also holds. This rule can only be used for isolated systems.

\begin{corollary}
The following asymmetric rule holds for isolated systems
\begin{equation}\label{rul:asym}
\frac{\one\ \run\ \{U\}\ \run\ G\ \qquad\ P'\ R'\ \{U'\}\ Q'\ G'\ \qquad\ G\leq R'}{P'\ \run\ \{U\|U'\}\ Q'\ (G\|G')}
\end{equation}
\end{corollary}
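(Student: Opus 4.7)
The plan is to derive the rule directly by unpacking the two quintuples into their constituent inequalities and chaining through monotonicity, rather than by a literal instantiation of Rule~(\ref{rul:sum}). Unfolding the conclusion $P'\ \run\ \{U\|U'\}\ Q'\ (G\|G')$, I need to establish $P'\cdot(\run\|(U\|U'))\leq Q'$ together with $U\|U'\leq G\|G'$.

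The guarantee side is immediate: $U\leq G$ comes from the first quintuple, $U'\leq G'$ from the second, and $\|$-monotonicity yields $U\|U'\leq G\|G'$.

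For the pre/post side, my first move will be to collapse the leading $\run$. In the isolated-system setting the frame of $\|$ is the external-action alphabet of $U\|U'$, so the identity $P\pr{A}\run(A)=P$ recalled at the start of Section~\ref{sec:rely-guarantee} reduces $\run\|(U\|U')$ to $U\|U'$. From $U\leq G$ combined with the side condition $G\leq R'$, transitivity gives $U\leq R'$; $\|$-monotonicity then lifts this to $U\|U'\leq R'\|U'$; and $\cdot$-monotonicity applied to the pre/post part $P'\cdot(R'\|U')\leq Q'$ of the second quintuple delivers the required $P'\cdot(U\|U')\leq Q'$.

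The one subtle point worth flagging is that the first quintuple's pre/post component $\one\cdot(\run\|U)\leq\run$ is not used at all in this chain; only its guarantee part $U\leq G$ is consumed. This is exactly why the corollary cannot be obtained as a verbatim instance of Rule~(\ref{rul:sum}) with $P=\one$, since that route would silently impose the unwanted side condition $P'\leq\one$ through the premise $T\leq P$. The $\run$-unit equation for $\|$ plays the role of that vacated hypothesis, so no restriction on the precondition $P'$ is needed.
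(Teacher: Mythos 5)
Your proof is correct and is essentially the paper's own argument for Rule~(\ref{rul:sum}) transposed to the asymmetric case --- unpack the quintuples, chain $U\leq G\leq R'$ through monotonicity of $\|$ and $\cdot$, use $P'\cdot(R'\|U')\leq Q'$, and absorb the leading $\run$ via $\run\|(U\|U')=U\|U'$ --- which is exactly the intended derivation, the paper giving no separate proof of the corollary. Your side remark is also accurate: a verbatim instantiation of Rule~(\ref{rul:sum}) with $P=\one$ would impose $P'\leq\one$ through the condition $T\leq P$, and your direct chain shows this is unnecessary precisely because the first quintuple's pre/post component $\one\cdot(\run\|U)\leq\run$ is never consumed, only its guarantee $U\leq G$.
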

When the system is not isolated, a more general rule is needed. We can show that if there exists $S$ such that $S\leq R$, $S\leq R'$ and $S\|S\leq S$ then we can infer from the premises of Rule~(\ref{rul:sum}) that the quintuple
$$T\ S\ \{U\|U'\}\ Q\ (G\|G')$$ holds, where $T\leq P$ and $T\leq P'$. In other words, if the system $U\|U'$ is run within an environment that guarantees $S$ then it satisfies the postconditions $Q$ and $Q'$ and guarantees $G\|G'$.

\begin{proposition}
We have the following sequential rule:
\begin{eqnarray}
\frac{\begin{array}{c}
P\ R\ \{U\}\ Q\ G\ \qquad\ P'\ R'\ \{U'\}\ Q'\ G'\ \qquad\ Q\leq P' \\ 
\qquad\ (R\|U)\cdot(R'\|U') = (R\cdot R')\|(U\cdot U')
\end{array}}{
\begin{array}{c}P\ (R\cdot R')\ \{U\cdot U'\}\ Q'\ (G\cdot G')
\end{array}}\label{rul:seq}
\end{eqnarray}
\end{proposition}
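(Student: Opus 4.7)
The plan is to unfold the definition of the quintuple in the conclusion and verify its two conjuncts separately, using the four hypotheses together with associativity of $\cdot$ and monotonicity of $\cdot$ and $\|$ with respect to $\leq$.

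The guarantee part $U\cdot U'\leq G\cdot G'$ is immediate: the first two hypotheses give $U\leq G$ and $U'\leq G'$, and applying monotonicity of sequential composition (a consequence of equations~(\ref{eq:+-right-dist}) and~(\ref{eq:+-left-subdist}), since $\leq$ is the natural order induced by $+$) yields the result at once.

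The precondition/postcondition part is obtained by the chain
\begin{eqnarray*}
P\cdot\bigl((R\cdot R')\|(U\cdot U')\bigr)
&\equiv& P\cdot(R\|U)\cdot(R'\|U')\\
&\leq& Q\cdot(R'\|U')\\
&\leq& P'\cdot(R'\|U')\\
&\leq& Q'.
\end{eqnarray*}
The first step uses the extra hypothesis $(R\|U)\cdot(R'\|U') = (R\cdot R')\|(U\cdot U')$ together with associativity of $\cdot$ (equation~(\ref{eq:.-assoc})). The second step uses $P\cdot(R\|U)\leq Q$, obtained from the first quintuple, combined with monotonicity of $\cdot$ on the right (applied to the factor $(R'\|U')$). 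The third step is monotonicity of $\cdot$ applied to $Q\leq P'$. The last step is the sequential part of the second quintuple, namely $P'\cdot(R'\|U')\leq Q'$.

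The only delicate point is the third hypothesis $(R\|U)\cdot(R'\|U') = (R\cdot R')\|(U\cdot U')$; one direction, $(R\|U)\cdot(R'\|U')\leq(R\cdot R')\|(U\cdot U')$, is just an instance of the exchange law~(\ref{eq:exchange-law}), but the reverse inclusion is in general false (as already illustrated by the counterexample $\mathbf{a}\|\mathbf{a}$ on synchronised actions) and must therefore be assumed as part of the rule. Once this equality is available, the argument reduces to purely monotonic rewriting and there is no genuine obstacle; the work is to lay out the four inequalities in the correct order and to invoke monotonicity of $\cdot$ (which is a derived law of the algebra) rather than an unstated distributivity.
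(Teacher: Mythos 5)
Your proof is correct and follows essentially the same route as the paper: unfold the quintuple, chain $P\cdot(R\|U)\cdot(R'\|U')\leq Q\cdot(R'\|U')\leq P'\cdot(R'\|U')\leq Q'$ via monotonicity and the assumed equality, and get the guarantee part from monotonicity of $\cdot$. Your additional observation that only one direction of the equality hypothesis follows from the exchange law is accurate but not needed for the proof itself.
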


\begin{proof}
We have $P\cdot(R\|U)\leq Q$ and $P'\cdot(R'\|U')\leq Q'$, since $Q\leq P'$, monotonicity implies 
$P\cdot(R\|U)\cdot(R'\|U')\leq Q'$ and the last premise gives us $P\cdot[(R\cdot R')\|(U\cdot U')]\leq Q'$. The guaranteed part follows from monotonicity of $\cdot$.
\end{proof}

These rules support the construction of larger systems from the components using concurrent and/or sequential compositions. Together with these rely/guarantee rules, we will also make extensive use of Equation~(\ref{eq:pc-supdist}) because it provides the transport of probabilistic choices to the ``upper level\rq\rq{} of the specification automaton. That is, it allows us to write simple rely properties and postconditions of the form $(bad + good)\pc{p}good$ or even $bad\pc{p}good$ where $good$ and $bad$ are usually standard automata. With the testing interpretation of simulation, we conclude that the maximal probability for $bad$ to happen is bounded from above by $1-p$.

\begin{example}
Using our running example, we have the following rely/guarantee quintuples
$$\one\ \run\ \{M\}\ \run\ H\qquad\textrm{and}\qquad \coin \ H\ \{U'\}\ Q\ \run$$
where $Q$ is given by the diagram in Figure~\ref{fig:post}.
\begin{figure}
\begin{displaymath}
\xymatrix{&w_0\ar[d]_{\coin}& \\
&\arp[dl]|{0.04}\arp[dr]|{0.96}& \\
w_1\ar[d]_{\kick}&&w_2\ar[d]^{\kick,\tea}\\
w_3\ar[d]_{\kick}&& w_4\ar@(d,r)_{\tea}\\
w_5\ar@(d,l)^{\fail}&& 
}
\end{displaymath}
\caption{The postcondition for the system in the form $Q = \coin\cdot Q'$.}\label{fig:post}
\end{figure}
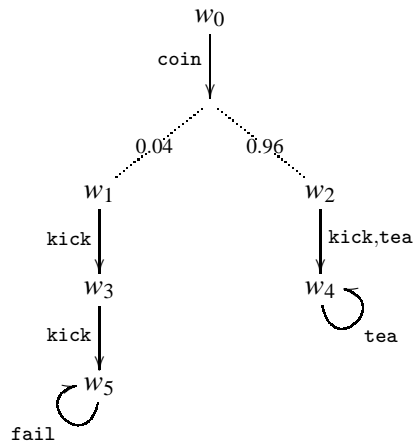

The first quintuple has been established algebraically in the previous section, that is, $M\leq H$. As for the second one, it is clear that $H\| U'\leq Q'$ which can be established by direct automata calculation or using the algebra as before.

Therefore, Rule~(\ref{rul:asym}) implies that the quintuple
$$\coin\ \run\ \{M\|U'\}\ Q\ H$$
is valid. That is, $V\| U = \coin\cdot(M\|U')\leq Q$~\footnote{Notice that $M\|U'$ does not enable any transition labelled by $\fail$ because that action has to be synchronised. But the established property says that fail can occur only with probability at most $0.04$, but not that it has to occur at all. If we wanted an explicit $\fail$, we can form self loops labelled by $\fail$ on each state of $V$.} which says that with probability at least $0.96$ the user needs to kick the machine at most once to get tea. (Note we have used the fact that $(\coin\cdot M)\|(\coin\cdot U\rq{})\equiv \coin\cdot(M\|U\rq{})$ which is a stronger version of the interchange law).
\end{example}

\section{Related Work}

This paper aims to develop an algebra that accounts for nondeterminism, probability and concurrency in a Kleene algebraic fashion~\cite{Con71,Hoa09,Koz94,Koz97,Mci04,Mo07}. To the best of our knowledge, there is no algebraic structure in the style of Kleene algebras that includes nondeterminism, probability and concurrency. The algebra we develop is a mixture of concurrent~\cite{Hoa09} and probabilistic~\cite{Mci04} Kleene algebras augmented with probabilistic choices to manipulate quantitative properties. The soundness of the algebra is established using probabilistic automata modulo simulation as in~\cite{Den07a}. That paper provides an extensive survey of the algebraic laws for such model in the style of a recursion-free process algebra, hence there is no sequential composition (which is mandatory for the encoding of interference) nor Kleene star (which provides a meaning for terminating loops). Another related work is the quantitative Kleene coalgebra of~\cite{Sil11}. That paper focuses on unifying various constructions of transition systems through the use of functor-coalgebras. It also provides a coalgebra composed of algebraic expressions though the main focus is on the generalisation of Kleene's correspondence between operational semantics and the expressions through the use of derivatives. Moreover, these expressions are generated from a signature that is different from the one we propose in this paper, most importantly, concurrency is not considered. Reconciling the two approaches seems very promising.

The algebraic approach to the rely/guarantee calculus of Section~\ref{sec:rely-guarantee} is a straightforward generalisation of~\cite{Hoa09} but now proved to be valid for probabilistic scenario as well. Our approach is conceptually related to~\cite{Kom12} where the rules are interpreted against probabilistic strong simulation. The precongruence of~\cite{Kom12} is not applicable in our setting because we need explicit internal actions to construct the algebraic operators and these internal actions cannot usually be removed in presence of probability and nondeterminism. Hence, strong simulation is inadequate for an algebraic approach that should be a generalisation of Kleene algebras. The disadvantage of using weak simulation is that efficient decidability is unknown in contrast to strong simulation~\cite{Kom12}. Though this is an interesting problem, 
our focus is on using the algebra for direct proof of the existence of a simulation rather than computing a simulation relation directly. Therefore, we can use theorem provers or proof assistants to handle the automation.

\section{Conclusion}

This paper presented a Kleene algebraic approach to systems exhibiting nondeterminism, probability and concurrency. A sound axiomatisation has been presented with respect to the set of probabilistic automata modulo probabilistic simulation. The simulation used is equivalent to the vector may testing preorder of~\cite{Den07a} which provides the interpretation of maximal probability of failure. The algebra was constructed as a combination of probabilistic and concurrent Kleene algebras. The nondeterminisms of both algebras coincide, probability is handled by the subdistributivity law of probabilistic Kleene algebra as well as the explicit probabilistic choices, and concurrency is mainly handled by a weakening of concurrent Kleene algebra. 

An important law of probability is summarised by the equation 
$$P\cdot (Q\pc{p} R) \leq (P\cdot Q)\pc{p} (P\cdot R).$$ In multiple cases, it allows us to ``move probabilities upward" and to write specification of the form  $(bad + good)\pc{p}good$. This expression says that the probability of executing $bad$ is bounded above by $p$ because of the testing interpretation of the simulation. In fact, as shown in the vending machine example, it is also possible to write properties such as ``the maximal probability of failing after $k$-steps is less than $p$\rq\rq{}.  

We note finally that the rely/guarantee calculus is valid for any model satisfying the axioms. This applies to true-concurrent models as well as fragments of the establised automata models as long as they have sequential composition instead of prefixing. However, the action-based interleaving model provides a limited application of the rely/guarantee calculus. Therefore, the construction of an alternative model of the algebra using true-concurrency semantics is part of our future investigation. 

%\nocite{*}
\bibliographystyle{eptcs}
\bibliography{./references-comp}

\newpage

\setcounter{theorem}{0}
\renewcommand{\thetheorem}{A.\arabic{theorem}}
    
\section*{Appendix}

\begin{proposition}\label{ap:pro:precongruence}
Simulation is a precongruence i.e. if $P\leq Q$ then $P+R\leq Q+R$, $P\cdot R\leq Q\cdot R$,  $P^*\leq Q^*$, $P\pc{p}R\leq P\pc{p}R$, $P\pr{A} R\leq Q\pr{A} R$ and the same holds for binary operators when the order of the arguments is reversed.
\end{proposition}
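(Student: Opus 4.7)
The plan is a case analysis, one operator at a time. In every case I will take a simulation $S\subseteq P\times\DD(Q)$ witnessing $P\leq Q$ and extend it canonically by the ``identity'' on the untouched operand $R$, namely $I_R=\{(r,\delta_r)\mid r\in R\}$, together with a handful of pairs involving any freshly introduced states. Throughout I will rely on two structural facts: first, $\tau$-transitions of $Q$ lift to $\tau$-transitions of $Q\cdot R$, $Q+R$, $Q^*$, $Q\pc{p}R$, and $Q\pr{A}R$ (since $\tau\notin A\subseteq E$), so a weak transition $\Trans{\nu}{a}{\nu'}$ in $Q$ transports across the operator; second, by the definition of lifting, $\mu\,\overline{S}\,\nu$ plus $\mu'\,\overline{I_R}\,\nu'$ combines into $(\mu+\mu')\,\overline{S\cup I_R}\,(\nu+\nu')$ and analogously for products with $\delta_r$.

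For $+$ I take $S':=S\cup I_R\cup\{(z_{P+R},\delta_{z_{Q+R}})\}$; clause~(a) is immediate because the two initial distributions are point masses at the fresh states, and the two $\tau$-successors on each side are matched by $S$ (for the $P/Q$-branch) and $I_R$ (for the $R$-branch). For $\cdot$ I take $S':=S\cup I_R$; the only subtle clause is that a $\tau$-transition $\trans{x}{\tau}{\mu_0^R}$ coming from $x\in F_P$ must be matched. Because $xS\nu$ implies by clause~(c) that $\Trans{\nu}{}{\nu''}$ with $\nu''\in\DD(F_Q)$, in $Q\cdot R$ we may continue $\Trans{\nu''}{\tau}{\mu_0^R}$ (synthesising the new $\tau$ from each final state of $Q$), and $\mu_0^R\,\overline{I_R}\,\mu_0^R$. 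The case of $\pc{p}$ is trivial since the initial distribution is simply $p\mu_0^P+(1-p)\mu_0^R$ and the transition sets are disjoint. For the Kleene star I combine the recipe for $\cdot$ with one fresh pair $(z_{P^*},\delta_{z_{Q^*}})$; the new $\tau$-loops from final states of $P$ back to $z_{P^*}$ are matched via clause~(c) followed by the analogous $\tau$-loop in $Q^*$.

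The main obstacle is $\pr{A}$, because here a synchronised transition in $P\pr{A}R$ must be matched by a sequence of moves in $Q\pr{A}R$ where the $Q$-side may make many internal steps but the $R$-side can only move exactly once, and in lock-step with $Q$. I will define $S'\subseteq(P\times R)\times\DD(Q\times R)$ by $(x,r)\,S'\,(\nu\times\delta_r)$ iff $xS\nu$, extended additively. Given a synchronised transition $\trans{(x,r)}{a}{\mu\times\nu_R}$ with $a\in A$, I decompose the witness $\Trans{\nu}{a}{\nu'}$ from $S$ as $\nu\Rightarrow\nu_1\xrightarrow{\overline{a}}\nu_2\Rightarrow\nu'$ and lift each segment: the two $\Rightarrow$-blocks become internal moves of $Q$ alone in the product (legal since $\tau\notin A$), and the middle step is performed by combining each point $a$-transition in the convex decomposition of $\nu_1\xrightarrow{\overline{a}}\nu_2$ with the single $R$-transition $\trans{r}{a}{\nu_R}$, yielding $\nu_1\times\delta_r\xrightarrow{\overline{a}}\nu_2\times\nu_R$ and hence $\Trans{\nu\times\delta_r}{a}{\nu'\times\nu_R}$. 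The interleaved cases ($a\notin A$, $a\in\Sigma_\tau$) are easier, using the same $\tau$-lifting argument. Clauses~(a) and~(c) follow by similar decomposition: the product of initial/final distributions factors through $S$ on the $P/Q$ component and the identity on the $R$ component. This verification is the one place where the precise form of Definition~\ref{df:simulation}, rather than just its induced preorder, really bites.
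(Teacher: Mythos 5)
Your proposal is correct, and for the two cases the paper actually writes out --- sequential composition and Kleene star --- it is essentially the paper's own proof: the same relation $S\cup\{(r,\delta_r)\mid r\in R\}$ (resp.\ $S\cup\{(z_{P^*},\delta_{z_{Q^*}})\}$), with the synthesised $\tau$-transition out of a final state of $P$ matched by invoking clause~(c) of Definition~\ref{df:simulation} and then the corresponding fresh $\tau$-steps of $Q\cdot R$ (resp.\ $Q^*$). Where you diverge is in coverage rather than technique: the paper simply cites Deng et al.\ and Segala for $+$, $\pc{p}$ and $\pr{A}$, whereas you verify them directly. Your treatment of $\pr{A}$ --- relating $(x,r)$ to $\nu\times\delta_r$ when $xS\nu$, decomposing the witnessing weak transition as $\Trans{\nu}{}{\nu_1}$, $\trans{\nu_1}{\ov{a}}{\nu_2}$, $\Trans{\nu_2}{}{\nu'}$, and lifting the $\tau$-segments with the $R$-component frozen while synchronising only the middle step with $\trans{r}{a}{\nu_R}$ --- is sound and is in substance the argument of the cited sources; note that it tacitly uses closure of weak transitions under convex combination (available because $\trans{}{\ov\tau}{}$ is reflexively extended), which you do flag at the outset, and the same fact is what makes your clause~(a) check for $\pc{p}$ go through. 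So: same route as the paper where the paper proves anything, plus a self-contained sketch of the cases the paper outsources.
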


\begin{proof}
The construction of simulation in each of these cases is easy and they have been proven elsewhere~\cite{Den07a,Seg96} except for sequential composition and Kleene star which we present here.

Let $P,Q,R$ be probabilistic automata such that $S:P\to Q$ is a simulation. We show that the relation $$S'=\{(x,\nu)\ |\ (x\in P\wedge \nu\in \DD(Q)\wedge xS\nu)\vee(x\in R\wedge \nu= \delta_x)\}$$
is a simulation from $P\cdot R$ to $Q\cdot R$. Since the initial distributions of $P\cdot R$ and $Q\cdot R$ are respectively the initial distributions of $P$ and $Q$, Properties (a) and (c) of Definition~\ref{df:simulation} are immediate for this case. As for part (b), let $\trans{x}{a}{\mu}\in\lra_{P\cdot R}$ and $(x,\nu)\in S'$. There are three cases:
\begin{itemize}
\item[-] $\trans{x}{a}{\mu}\in\lra_P$, and then $(x,\nu)\in S$ and we are done because $S$ is a simulation from $P$ to $Q$ and $S\subseteq S'$.
\item[-] $\trans{x}{a}{\mu}\in\lra_R$, and then $\nu = \delta_x$ and we are done because $id_R$ is a simulation.
\item[-] $\trans{x}{\tau}{\psi_0}$ such that $x\in F_P$  and $\psi_0$ is the initial distribution of $R$. We are only interested in the case where $x$ is reachable from $\phi_0$. Since $P\leq Q$, there exist $\nu\in\DD(Q)$ and $\nu'\in\DD(F_Q)$  such that $x S\nu$, and $\Trans{\nu}{}{\nu'}$. Therefore, $\Trans{\nu}{}{\psi_0}$ is a valid weak transition in $Q\cdot R$ and we have $\psi_0\ov{id_R}\psi_0$.
\end{itemize}
The dual $R\cdot P\leq R\cdot Q$ also holds using an analogous relation.

Finally, we show that Kleene star is monotonic with respect to $\leq$. Let $S:P\to Q$ be a simulation and denote by $x_P,x_Q$ the respective initial states of $P^*$ and $Q^*$ and consider the relation 
$$S' = S\cup\{(x_P,\delta_{x_Q})\}.$$ 
Properties (a) and (b) are routine. Let $\trans{x}{a}{\mu}\in\lra_{P^*}$ and $xS'\nu$, there are two cases:
\begin{itemize}
\item[-] $xS\nu$, then $\trans{x}{a}{\mu}\in\lra_P$ or $x\in F_P$ and $\mu=\delta_{x_P}$. 
\begin{itemize}
\item If $\trans{x}{a}{\mu}\in\lra_P$ then there exists $\nu'\in\D(Q)$ such that $\Trans{\nu}{\ov a}{\nu'}$ and $\mu S\nu'$ because $S$ is simulation. Since $S\subseteq S'$, we deduce that $\mu S'\nu'$. 
\item If $x\in F_P$ and $\trans{x}{\tau}{\delta_{x_P}}$, we can assume that $x$ is a reachable final state of $P$ because we consider probabilistic automata to be simulation equivalent if and only if they reachable parts are. Since $S$ is simulation and $x S\nu$, we have $\Trans{\nu}{}{\nu'}$ for some $\nu'\in \DD(F_Q)$. Therefore, $\Trans{\nu}{}{\delta_{x_Q}}$ and $x_PS'\delta_{x_Q}$ by definition of $S'$. 
\end{itemize}
\item[-] $x_P S'\delta_{x_Q}$ and $\trans{x_P}{\tau}{\mu_0}$ where $\mu_0$ is the initial distribution of $P$. Since $S$ is a simulation, there exists $\nu_0'$ such that $\Trans{\nu_0}{}{\nu_0'}$ and $\mu_0\ov{S}\nu_0'$. Hence, $\Trans{\delta_{x_Q}}{}{\nu_0'}$ and $\mu_0S\rq{}\nu_0\rq{}$.  \qedhere
\end{itemize}
\end{proof}

\begin{proposition}\label{ap:pro:sequential}
The sequential composition satisfies Equations~(\ref{eq:right-skip}-\ref{eq:pc-supdist}).
\end{proposition}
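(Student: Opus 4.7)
The plan is to prove each of the eight laws~(\ref{eq:right-skip})--(\ref{eq:pc-supdist}) by exhibiting explicit probabilistic simulations satisfying Definition~\ref{df:simulation}. For the equivalences one must construct simulations in both directions; for the two inequalities a single simulation from the smaller side to the larger one suffices. Throughout, states arising in different summands of $+$, $\pc{p}$ or different arguments of $\cdot$ are taken to be disjoint, as mandated by the constructions in Section~\ref{sec:p-automata}.

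First I would dispatch the routine laws. Associativity~(\ref{eq:.-assoc}) and right distributivity~(\ref{eq:+-right-dist}) produce automata that coincide up to a harmless relabelling of states, so mapping each state to the corresponding point distribution yields simulations in both directions. Equation~(\ref{eq:pc-dist}) is similar once one observes that $(P\pc{p}Q)\cdot R$ and $P\cdot R\pc{p}Q\cdot R$ share the same initial distribution $p\mu_0^P+(1-p)\mu_0^Q$, the same internal $P$- and $Q$-transitions, and the same $\tau$-transitions from $F_P\cup F_Q$ into the initial distribution of $R$. For the skip laws~(\ref{eq:right-skip}) and~(\ref{eq:left-skip}) I would map states of $P$ to their point distributions and use the reflexive $\tau$-step introduced in Section~\ref{sec:simulation} to absorb the single extra $\tau$-transition arising from composition with $\one$; condition~(c) of Definition~\ref{df:simulation} is then discharged by the $\tau$-transition $\trans{x}{\tau}{\delta_w}$ from any $x\in F_P$ into the singleton final state $w$ of $\one$. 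Law~(\ref{eq:left-zero}) is immediate, since $F_\zero=\emptyset$ so the added $\tau$-transitions from $F_\zero$ to the initial distribution of $P$ never fire and the reachable fragment of $\zero\cdot P$ coincides with $\zero$.

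The left subdistributivity~(\ref{eq:+-left-subdist}) requires a slightly more substantial construction. I would define a simulation from $P\cdot Q+P\cdot R$ to $P\cdot(Q+R)$ by sending the fresh nondeterministic root $z$ of the LHS to the initial distribution $\mu_0^P$ on the RHS and by identifying each $P$-, $Q$-, or $R$-state on the LHS with its obvious counterpart on the RHS. The $\tau$-transitions $\trans{z}{\tau}{\mu_0^{P_i}}$ for $i=1,2$ are matched by the reflexive $\tau$-step on $\mu_0^P$, and the sequential $\tau$-transition $\trans{x}{\tau}{\mu_0^Q}$ from $x\in F_P$ in either summand of the LHS is matched on the RHS by the weak transition $\trans{x}{\tau}{\delta_w}$ followed by $\trans{w}{\tau}{\mu_0^Q}$, where $w$ is the fresh root of $Q+R$; the case for $R$ is symmetric.

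The main obstacle is the probability-transport law~(\ref{eq:pc-supdist}). Let $P_1,P_2$ denote the two disjoint copies of $P$ occurring on the RHS $P\cdot Q\pc{p}P\cdot R$, and for each $x\in P$ let $x_1,x_2$ be the corresponding copies. I would define a simulation $S$ from $P\cdot(Q\pc{p}R)$ to $P\cdot Q\pc{p}P\cdot R$ by setting $x\,S\,(p\delta_{x_1}+(1-p)\delta_{x_2})$ for every $x\in P$, and by letting $S$ act as the identity on the $Q$- and $R$-parts via the canonical inclusions into the two summands. The initial lifting then yields $\mu_0^P\,\ov{S}\,(p\mu_0^{P_1}+(1-p)\mu_0^{P_2})$, which is exactly the initial distribution of the RHS. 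A $P$-transition $\trans{x}{a}{\mu}$ is matched by the combined transition from $p\delta_{x_1}+(1-p)\delta_{x_2}$ with target $p\mu_1+(1-p)\mu_2$, and the required $\ov{S}$-relatedness follows by decomposing $\mu$ and applying $S$ pointwise. The delicate step is the $\tau$-transition $\trans{x}{\tau}{p\mu_0^Q+(1-p)\mu_0^R}$ with $x\in F_P$: on the RHS this is matched by firing $\trans{x_1}{\tau}{\mu_0^Q}$ and $\trans{x_2}{\tau}{\mu_0^R}$ together and taking the $p$-weighted combination, reconstructing $p\mu_0^Q+(1-p)\mu_0^R$ exactly. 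Verifying that this mixed weak transition respects $\ov{S}$ is where the disjointness of the $Q$- and $R$-components inside the two summands is essential.
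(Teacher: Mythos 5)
Your proposal is correct and follows essentially the same route as the paper's proof: explicit simulation relations for each law, with the copy-collapsing relation (both copies of a duplicated component mapped to point distributions on the single copy) for the distributivity laws, the relation sending the fresh root $z$ to $\mu_0^P$ together with copy identification for~(\ref{eq:+-left-subdist}), and the relation $x\,S\,(p\delta_{x_1}+(1-p)\delta_{x_2})$ carrying the probabilistic choice through $P$ for~(\ref{eq:pc-supdist}), which is exactly the paper's construction.
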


\begin{proof}
Equations~(\ref{eq:right-skip}) and~(\ref{eq:left-skip}) are clear and~(\ref{eq:left-zero}) follows form the fact that $P\equiv Q$ iff their reachable parts are simulation equivalent.

Associativity~(\ref{eq:.-assoc}) is evident because the left and right hand side automata are exactly the same.

For distributivity~(\ref{eq:+-right-dist}), let us write the left hand side term as 
$P\cdot R+Q\cdot R_c$ where $R_c$ is a copy of $R$ whose states are renamed to $x_c$ for every state $x$ of $R$. We construct a relation $S\subseteq(P\cup Q\cup\{z\}\cup R\cup R_c)\times \DD(P\cup Q\cup\{z\}\cup R)$ such that $S = \{(x,\delta_x), (x_c,\delta_x)\ |\ x\in R\wedge x_c\textrm{ is the copy of } x\}\cup id_{P+Q}$. It is easy to shown that $S$ is a simulation and so is its inverse.

For subdistributivity~(\ref{eq:+-left-subdist}), we consider the relation 
$$S = \{(x,\delta_x), (x_c,\delta_x)\ |\ x\in P \wedge x_c\textrm{ is the copy of } x\}\cup\{(z,\mu_0)\}\cup id_Q\cup id_R$$
where $z$ is the initial state of $P\cdot Q+P\cdot R$ and $\mu_0$ is the initial distribution of $P$. It is again straightforward to prove that $S$ is indeed a simulation. 

Equation~(\ref{eq:pc-dist}) is proved using the exact same simulation constructed in the case of Equation~(\ref{eq:+-right-dist}).

For the last equation, let 
$$S = \{(x,\delta_x\pc{p}\delta_{x_c})\ |\ x\in P\wedge x_c\textrm{ is the copy of } x\}\cup id_Q\cup id_R$$
This simulation essentially says that we carry down the probabilistic choice $\pc{p}$ on the left hand side until it needs to be resolved.
\begin{itemize}
\item[-] By construction of the simulation, we have $\mu_0\ov{S}(\mu_0\pc{p}\mu_{0c})$ where $\mu_0$ and $\mu_{0c}$ are the respective initial distributions of $P$ and $P_c$.
\item[-] Let $\trans{x}{a}{\mu}\in\lra_{P\cdot(Q\pc{p} R)}$ and $xS\nu$, there are three cases
\begin{itemize}
\item $\trans{x}{a}{\mu}\in\lra_{P}$, therefore $\nu = \delta_x\pc{p}\delta_{x_c}$ and $\trans{\nu}{\ov{a}}{\mu\pc{p}\mu_c}$ where $\mu_c$ is the copy of $\mu$.
\item $\trans{x}{a}{\mu}\in\lra_{Q}\cup\lra_R$, then we are done because $id_Q\cup id_R\subseteq S$.
\item $\trans{x}{\tau}{\mu_{0Q}\pc{p}\mu_{0R}}$ and $x\in F_P$, then $\trans{x}{\tau}{\mu_{0Q}}$ and $\trans{x}{\tau}{\mu_{0R}}$ are valid transitions of $P\cdot Q$ and $P_c\cdot R$. But $\nu = \delta_x\pc{p}\delta_{x_c}$ because $x\in P$, therefore $\trans{\nu}{\ov{\tau}}{\mu_{0Q}\pc{p} \mu_{0R}}$.
\end{itemize}
\item[-] Let $xS\nu$ and $x$ is a final state. By definition of $\pc{p}$, $x\in F_Q\cup F_R$ and hence $\nu = \delta_x\in\DD(F_Q)\cup\DD(F_R)$.
\end{itemize}
\end{proof}

Before we prove that the Kleene star satisfies the usual unfold and left induction law of probabilistic Kleene algebra, let us introduce the notion of unfolding which will simplify the proof of the induction law considerably. It is essentially a cleaner version of our construction in~\cite{Rab11}. We denote $\unfold(P)$ the unfold of any automaton $P$~\cite{Seg03}, that is, the automaton
$$(\Path(P),\lra,\mu_0,F)$$
where 
$$\lra = \{\trans{\alpha}{a}{\mu}\ |\ \alpha\in\Path(P)\wedge\exists\mu\rq{}\in\DD(P):\trans{\last(\alpha)}{a}{\mu\rq{}}\in\lra_P\wedge\mu(\alpha a x) = \mu\rq{}(x)\}$$
and 
$$F = \{\alpha\in\Path(P)\ |\ \last(\alpha)\in F_P\}.$$
and $\mu_0$ is the initial distribution of $P$.
This construction provides us with an automaton whose states are finite paths in $P$ and there is a transition between two paths $\alpha,\alpha'$ iff $\alpha' = \alpha ax$ where $a\in\Sigma_\tau$ and $x\in P$. Such a transition is labelled by $a$. It is now easy to show that the relation $\{(\alpha,\delta_{\last(\alpha)})\ |\ \alpha\in\Path(P)\}$ is a simulation from $\unfold(P)$ to $P$ and the inverse is also a simulation from $P$ to $\unfold(P)$~\cite{Seg03}.

\begin{proposition}\label{ap:pro:kleene-star}
Kleene star satisfies Equation~(\ref{eq:*-unfold}) and the induction law~(\ref{eq:*-induction}).
\end{proposition}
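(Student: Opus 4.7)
The plan is to prove the two laws separately by constructing explicit probabilistic simulations. The unfold equation is handled directly by hand, while the induction law uses the $\unfold$ construction introduced just before this proposition to bypass the apparent self-reference.

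For $P^*\equiv\one+P\cdot P^*$, write $P^*=(P\cup\{z\},\ldots,\delta_z,\{z\})$ and let $\one+P\cdot P^*$ have state space $\{x_0\}\cup P_1\cup P_2\cup\{z_2\}\cup\{w\}$, with $x_0$ the state of $\one$, $P_1$ the copy of $P$ in the sequential composition, $P_2\cup\{z_2\}$ the inner $P^*$, and $w$ the new initial state of the sum. For the forward inclusion I would define a simulation $S$ relating $z$ to both $\delta_{x_0}$ and $\delta_{z_2}$, and each $x\in P$ to both $\delta_{x_1}$ and $\delta_{x_2}$. The initial condition holds via $\trans{w}{\tau}{\delta_{x_0}}$ together with $z\,S\,\delta_{x_0}$; the transition $\trans{z}{\tau}{\mu_0^P}$ of $P^*$ is matched from $\delta_{z_2}$ via $\trans{z_2}{\tau}{\mu_0^{P_2}}$; and the loop-back $\trans{x}{\tau}{\delta_z}$ for $x\in F_P$ is matched from $\delta_{x_1}$ via the $\tau$-edge $\trans{x_1}{\tau}{\delta_{z_2}}$ introduced by sequential composition. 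The reverse direction uses the symmetric assignment $w,x_0,z_2\mapsto\delta_z$ and each $P$-copy to the original state in $P$, with a dual verification.

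For the induction law $P\cdot Q\leq Q\Rightarrow P^*\cdot Q\leq Q$, let $T:P\cdot Q\to Q$ be the given simulation. By the equivalence $\unfold(R)\equiv R$ recorded just above, it suffices to produce a simulation $T':\unfold(P^*\cdot Q)\to Q$. I would define $T'$ on paths $\alpha$ by induction on the number of completed $P$-cycles in $\alpha$. For the initial path (at $z$), set $T'(\alpha)=\nu_0^Q$. Entering a $P$-cycle is handled by $T$'s initial condition $\mu_0^P\ov{T}\nu_0^T$ with $\Trans{\nu_0^Q}{}{\nu_0^T}$; transitions inside a cycle are simulated by $T$ directly. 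When a cycle closes at some $x\in F_P$ and returns to $z$, the transition $\trans{x}{\tau}{\nu_0^Q}$ of $P\cdot Q$ is simulated by $T$ as $\Trans{\nu_x}{}{\nu_x'}$ with $\nu_0^Q\ov{T}\nu_x'$; reuse $\nu_x'$ as the simulator of the unfolded $z$ at the end of that cycle. The standard lifting lemma (weak transitions are preserved under lifted simulations) then transports the witness $\Trans{\nu_0^Q}{}{\nu_0^T}$ along $\nu_0^Q\ov{T}\nu_x'$ to give $\Trans{\nu_x'}{}{\sigma}$ with $\mu_0^P\ov{T}\sigma$, so the same recipe applies to the next cycle.

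The main obstacle is verifying that the post-cycle simulator still supports both outgoing $\tau$-transitions of $z$: the exit $\trans{z}{\tau}{\nu_0^Q}$ and the iteration $\trans{z}{\tau}{\mu_0^P}$. The first is immediate from $\nu_0^Q\ov{T}\nu_x'$, and the second follows from the lifting-preservation property. Working on $\unfold(P^*\cdot Q)$ is essential: it turns what appears to be a circular definition (the simulator of $z$ depends on the simulator of $z$) into a clean induction over finite paths, where each individual path has only finitely many cycles to account for.
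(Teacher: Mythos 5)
Your relation for the direction $P^*\leq\one+P\cdot P^*$ is not a simulation because of the single pair $(z,\delta_{x_0})$. Clause~2 of Definition~\ref{df:simulation} quantifies over \emph{every} pair in the relation and \emph{every} transition of the simulated state, so this pair must itself match the iteration transition $\trans{z}{\tau}{\mu_0^P}$; it cannot, since $x_0$ is the terminal state of $\one$ and has no outgoing transitions, the only weak move from $\delta_{x_0}$ is to stay put, and $\mu_0^P$ is not lifted-related to $\delta_{x_0}$ (your relation sends states of $P$ only to their copies in $P_1$ and $P_2$). Saying the transition ``is matched from $\delta_{z_2}$'' misreads the definition: the match must come from the distribution occurring in the offending pair. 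The repair is to relate $z$ to $\delta_w$, the fresh initial state of the sum, instead of to $\delta_{x_0}$: from $\delta_w$ one $\tau$-step reaches $\mu_0^{P_1}$ (matching the iteration) and another reaches $\delta_{x_0}$ (discharging the final-state clause~3); this is exactly the pair $(v,\delta_u)$ in the paper's relation. With that change your remaining pairs and your reverse relation go through.

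For the induction law your route is genuinely different from the paper's, which unfolds to a loop-free automaton, writes $\unfold(P^*)$ as the increasing union of the approximants $\F^n(\zero)$ for $\F(X)=\one+P\cdot X$, proves $\F^n(\zero)\cdot Q\leq Q$ by ordinary induction on $n$ using the already established algebraic laws, and passes to the limit. Your cycle-by-cycle construction on $\unfold(P^*\cdot Q)$ could be made to work, but its pivotal step is wrong as stated: transporting the witness $\Trans{\nu_0^Q}{}{\nu_0^T}$ along $\nu_0^Q\,\ov{T}\,\nu_x'$ yields $\Trans{\nu_x'}{}{\sigma}$ with $\nu_0^T\,\ov{T}\,\sigma$, not with $\mu_0^P\,\ov{T}\,\sigma$; combined with $\mu_0^P\,\ov{T}\,\nu_0^T$ this only relates $\mu_0^P$ to $\sigma$ through the composite of $\ov{T}$ with itself (and after $k$ cycles through a $k$-fold composite), because $T$ need not be closed under following weak $\tau$-moves of $Q$. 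Consequently ``the same recipe applies to the next cycle'' does not follow: the states entered in the second cycle are no longer $T$-related to their simulators, so $T$'s initial and step clauses cannot be invoked again verbatim. To close the induction you must let $T'$ relate a path to any distribution reachable from a $T$-simulator of its last state by finitely many lifted weak moves, and then prove that such composite pairs still satisfy clauses~2 and~3 --- essentially the lifting-decomposition and weak-transition-preservation lemmas that underlie transitivity of simulation. That machinery is standard, but it is the crux of your argument and is currently missing.
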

\begin{proof}
Let $u$ be the initial state of $\one + P\cdot P^*$ and $v$ be the initial state of $P^*$. Since we add only one state and some transition in the construction of $P^*$, we denote $x_*\in P^*$ the state corresponding to $x\in P$. To prove Equation~(\ref{eq:*-unfold}), we consider the relation 
$$S = \{(x_*,\delta_{x_*}),(x_*,\delta_x)\ |\ x\in P\}\cup \{(v,\delta_v),(v,\delta_u)\}$$
from $P^*$ to $\one + P\cdot P^*$. We now prove that $S$ is a simulation.
\begin{itemize}
\item[-] For the initial distribution, we have $v S\delta_u$.
\item[-] Let $\trans{y}{a}{\mu}$ be a valid transition in $P^*$ and $yS\nu$. There are two cases
\begin{itemize}
\item $y = v$ and the transition is $\trans{v}{\tau}{\mu_0}$ where $\mu_0$ is the initial distribution of $P$. If $\nu = \delta_v$ then we are done because $\{(v,\delta_v)\}\cup\{(x_*,\delta_{x_*})\ |\ x\in P\}=id_{P^*}$. Else, $\nu = \delta_u$ and $\trans{u}{\tau}{\mu_0}$ is a valid transition in $\one + P\cdot P^*$.
\item $y = x_*$ for some $x\in P$ and:
\begin{itemize}
\item $\trans{x_*}{a}{\mu_*}$ is the copy of a transition of $P$. Therefore, if $\nu = \delta_{x_*}$ then the same transition belongs to $P\cdot P^*$. If $\nu = \delta_x$ then $\trans{x}{a}{\mu}$ is a transition of $P$ and $\mu_*\ov{S}\mu$.
\item or, $\trans{x_*}{\tau}{\delta_v}$ and in this case, if $\nu = \delta_{x^*}$ then that transition belongs to $P\cdot P^*$ again, else $\nu = \delta_x$ and $x\in F_P$. Therefore, $\trans{\delta_x}{\ov{\tau}}{\delta_v}$ is a lifted transition in $P\cdot P^*$.
\end{itemize}
\end{itemize}
\item[-] The conservation of final state is obvious because $F_{P^*} = \{v\}$ and $\trans{u}{\tau}{\delta_z}$ where $z$ is the final state of $\one$ in $\one+P\cdot P^*$.
\end{itemize}

With the similar reasoning, it holds that the inverse of $S$ is a simulation from $\one+P\cdot P^*$ to $P^*$.

We now prove the induction law~(\ref{eq:*-induction}). We can assume that $P$ is loop-free by unfolding it and therefore $1 + P\cdot\unfold(P^*)$ is again loop-free and simulation equivalent to $P^*$. Let $\F(X) = \one + P\cdot X$. Since $P\cdot\zero\leq P$, we show easily by induction that $\unfold(\F^n(\zero))\trianglelefteq \unfold(\F^{n+1}(\zero))$ where $\trianglelefteq$ is the inclusion of automata i.e. $X\trianglelefteq Y$ if the state space of $X$ is a subset of the state space of $Y$, transitions of $X$ are transitions of $Y$ and $F_X\subseteq F_Y$. We can then construct a limit automta $\lim_n\F^n(\zero) = \F^*(\zero)$ obtained as the countable union of component by component (the set of states is the union of the sets of states, the set of transitions is the union of sets of transitions,\dots). Since $P$ has no cycle, it follows that $\F^*(\zero) = \unfold(P^*)$.

Now assume that $P\cdot Q\leq Q$, then $(\one + P\cdot\zero)\cdot Q\leq (\one + P)\cdot Q\leq Q$ and by induction, $\F^n(\zero)\cdot Q\leq Q$ for every $n\in\mathbb{N}$. Moreover, since $\unfold(\F^n(\zero))\trianglelefteq\unfold(\F^{n+1}(\zero))$, we have $\unfold(\F^n(\zero))\cdot Q\trianglelefteq\unfold(\F^{n+1}(\zero))\cdot Q$ and since $F_{\unfold(\F^{n}(\zero))}\subseteq F_{\unfold(\F^{n+1}(\zero))}$ (inclusion of final states), $\lim_n(\unfold(\F^n(\zero))\cdot Q) = \F^*(\zero)\cdot Q$ (the two automaton are equal by construction). Hence $\F^*(\zero)\cdot Q\leq Q$.
\end{proof}
\begin{proposition}\label{ap:pro:parallel}
The parallel composition satisfies Equations~(\ref{eq:|-comm}-\ref{eq:|-distributes-pc}).
\end{proposition}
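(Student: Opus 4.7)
The plan is to establish each of the five equations by constructing an explicit (pair of) simulation relation(s) between the state spaces of the two automata involved, relying on the fact that simulation equivalence coincides with the order induced by the algebraic operators and that simulations are total on reachable states.

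For commutativity~(\ref{eq:|-comm}) and associativity~(\ref{eq:|-assoc}) I would simply lift the canonical bijections $(x,y)\mapsto(y,x)$ and $((x,y),z)\mapsto(x,(y,z))$ to singleton-supported relations $S=\{(s,\delta_{\sigma(s)})\mid s\textrm{ a state}\}$. The verification is routine because the CSP-style transition rules used in the definition of $\pr{A}$ are symmetric in the two arguments for actions in $A$ and preserve the state-space bijection for interleaved actions in $\Sigma_\tau\setminus A$; the product of initial distributions, the set of final states and the product distributions appearing as targets of transitions all respect the bijection. The inverse relation is a simulation by the same argument.

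For subdistributivity~(\ref{eq:|-subdistributes-+}), write the left-hand side as $(P_1\|Q)+(P_2\|R)$ where $P_1,P_2$ are disjoint copies of $P$, with fresh initial state $z$ and $\tau$ transitions to the two initial product distributions. On the right, $P\|(Q+R)$ has a fresh state $z'$ inside $Q+R$ firing $\tau$ into the initial distribution of $Q$ or of $R$. I would take $S=\{(z,\delta_{z''})\}\cup\{((x_i,y),\delta_{(x,y)})\mid i\in\{1,2\},\ x\in P,\ y\in Q\cup R\}$ where $z''$ is the initial state of $P\|(Q+R)$ and observe that the $\tau$-step from $z''$ into $\mu_0\times\delta_{z'}$ followed by the parallel-interleaved $\tau$-step into $\mu_0\times\nu_0$ (resp.\ $\mu_0\times\psi_0$) matches the single $\tau$-step from $z$, using the reflexive-transitive closure $\Trans{}{}{}$. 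For distributivity over $\pc{p}$~(\ref{eq:|-distributes-pc}) the two automata have identical transition relations and identical initial distributions $p(\mu_0\times\nu_0)+(1-p)(\mu_0\times\psi_0)=\mu_0\times(p\nu_0+(1-p)\psi_0)$ by bilinearity of the product of distributions; identity is a simulation in both directions.

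The main obstacle is the exchange law~(\ref{eq:exchange-law}). Here the left-hand automaton $(P\|Q)\cdot(P'\|Q')$ has a pair-shaped state space $(P\times Q)\uplus(P'\times Q')$ with a single $\tau$ transition from every state $(x,y)$ with $x\in F_P\wedge y\in F_Q$ into $\mu_0'\times\nu_0'$, whereas on the right $P\cdot P'\|Q\cdot Q'$ has state space $(P\cup P')\times(Q\cup Q')$ and two independent $\tau$ transitions, one from any $x\in F_P$ into $\mu_0'$ in the first component and one from any $y\in F_Q$ into $\nu_0'$ in the second. I would define
$$S=\{((x,y),\delta_{(x,y)})\mid (x,y)\in(P\times Q)\cup(P'\times Q')\}$$
on the obvious embedding of the left state space into the right. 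Property~(a) and~(c) of Definition~\ref{df:simulation} are immediate. For property~(b), visible or silent transitions within $P\|Q$ or within $P'\|Q'$ match directly because the CSP parallel rules agree on the shared state-space fragment. The critical case is the terminating $\tau$-step $\trans{(x,y)}{\tau}{\mu_0'\times\nu_0'}$ on the left: on the right I simulate it by the weak transition obtained by first firing $\trans{x}{\tau}{\mu_0'}$ interleaved against $\delta_y$, yielding $\mu_0'\times\delta_y$, and then firing $\trans{y}{\tau}{\nu_0'}$ interleaved against each point in $\supp(\mu_0')$, yielding $\mu_0'\times\nu_0'$; both are valid interleaved transitions of $\pr{A}$ because $\tau\notin A$, and their composition lies in $\Trans{}{}{}$. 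The matching of final states uses that $F_{P'\|Q'}=F_{P'}\times F_{Q'}=F_{P\cdot P'}\times F_{Q\cdot Q'}$ and is preserved by $S$. Equality fails in general, as the counterexample $\Pa\|\Pa\not\leq\zero$ with $a$ synchronised already shows, so no reverse simulation is expected.
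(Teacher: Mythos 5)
Your handling of the two equations the paper actually proves is essentially the paper's own argument: for associativity~(\ref{eq:|-assoc}) the lifted state-space bijection, and for the exchange law~(\ref{eq:exchange-law}) the very same injection $S=\{((x,y),\delta_{(x,y)})\mid (x,y)\in(P\times Q)\cup(P'\times Q')\}$, with transitions inside $P\times Q$ and $P'\times Q'$ matched directly and the terminating step $\trans{(x,y)}{\tau}{\mu_0'\times\nu_0'}$ matched by the weak transition $\trans{(x,y)}{\tau}{\mu_0'\times\delta_y}$ followed by the lifted step $\trans{\mu_0'\times\delta_y}{\ov\tau}{\mu_0'\times\nu_0'}$, together with $F_U=F_V$ and the observation that the converse fails. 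Where you differ is in coverage: the paper simply cites Deng et al.~\cite{Den07a} for commutativity~(\ref{eq:|-comm}), subdistributivity~(\ref{eq:|-subdistributes-+}) and distributivity over $\pc{p}$~(\ref{eq:|-distributes-pc}), whereas you prove them directly; your argument for~(\ref{eq:|-distributes-pc}) (identical state spaces, transition relations and final states, initial distributions equal by bilinearity of the product, so the identity relation is a simulation in both directions) is correct. One small repair is needed in your relation for~(\ref{eq:|-subdistributes-+}): $P\|(Q+R)$ has no fresh initial \emph{state} $z''$ --- its initial distribution is $\mu_0\times\delta_{z'}$, which is not a point distribution unless $\mu_0$ is --- so the pair $(z,\delta_{z''})$ should be replaced by $(z,\mu_0\times\delta_{z'})$; condition (1) of Definition~\ref{df:simulation} then holds with the trivial weak step, and the move $\trans{z}{\tau}{\mu_0\times\nu_0}$ (resp.\ $\mu_0\times\psi_0$) on the left is matched by the lifted interleaved step $\Trans{\mu_0\times\delta_{z'}}{}{\mu_0\times\nu_0}$ obtained from $\trans{z'}{\tau}{\nu_0}$ against each point of $\supp(\mu_0)$, which is what you intended. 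With that correction the proposal is sound.
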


\begin{proof}
Equations~(\ref{eq:|-comm}), (\ref{eq:|-subdistributes-+}) and $(\ref{eq:|-distributes-pc})$ are proven in~\cite{Den07a}. 

For the associativity, remind that when the frame is fixed then there is a standard simulation between $P\|(Q\|R)$ and $(P\|Q)\|R$ by associating each tuple $(x,(y,z))$ to $((x,y),z)$. That simulation is lifted to $(P\times (Q\times R))\times\DD((P\times Q)\times R)$ using point distributions and dually.

As for the interchange law~(\ref{eq:exchange-law}), we consider the injection 
$$S = \{((x,y),\delta_{(x,y)})\ |\ (x,y)\in (P\times Q)\cup (P'\times Q')\}$$
from $U = (P\| Q)\cdot (P'\| Q')$ to $V = P\cdot P'\|Q\cdot Q'$.
\begin{itemize}
\item[-] Using the definition of $\|$ and $\cdot$, we deduce that the initial distributions of $U$ and $V$ are the same.
\item[-] Let $(x,y)S\delta_{(x,y)}$ and $\trans{(x,y)}{a}{\mu}\in\lra_U$. There are three cases:
\begin{itemize}
\item $(x,y)\in P\times Q$ and $\mu = \mu_P\times\mu_Q\in\DD(P\times Q)$. In all three cases in the definition of $\|$, we have $\trans{(x,y)}{a}{\mu_P\times\mu_Q}\in\lra_V$.
\item $(x,y)\in P'\times Q'$ and $\mu = \mu_{P'}\times\mu_{Q'}\in\DD(P'\times Q')$. This is the same as the previous case.
\item $(x,y)\in F_P\times F_Q$ and the transition is $\trans{(x,y)}{\tau}{\mu_{0P'}\times\mu_{0Q'}}$ where $\mu_{0P'},\mu_{0Q'}$ are the respective initial distributions of $P',Q'$. Since $x\in F_P$, $\trans{x}{\tau}{\mu_{0P'}}\in\lra_{PP'}$ and similarly for $y\in F_Q$. Therefore, $\trans{(x,y)}{\tau}{\trans{\mu_{0P'}\times\delta_y}{\ov{\tau}}{\mu_{0P'}\times\mu_{0Q'}}}$ i.e. $\Trans{(x,y)}{}{\mu_{0P'}\times\mu_{0Q'}}$ is a weak lifted transition in $V$.
\end{itemize}
\item[-] Finally, $F_U = F_V$.\qedhere
\end{itemize}
\end{proof}

\end{document}